\newtheorem{theorem}{Theorem}[section]
\newtheorem{corollary}{Corollary}[theorem]
\newacronym{mcmc}{MCMC}{Markov-Chain Monte-Carlo}
\newacronym{psa}{PSA}{Prostate Specific Antigen}
\newacronym{icc}{$ICC$}{Intraclass Correlation Coefficient}
\newacronym{ims}{$IMS$}{Inter Measurement Sensitivity}
\newacronym{roi}{ROI}{Region of Interest}
\newacronym{rois}{ROIs}{Regions of Interest}
\newacronym{loa}{LoA}{Limits of Agreement}
\newacronym{rc}{$RC$}{Repeatability Coefficient}
\newacronym{cov}{$CoV$}{Coefficient of Variation}
\newacronym{adc}{ADC}{Apparent Diffusion Coefficient}
\newacronym{mip}{MIP}{Maximum Intensity Projection}
\newacronym{mcrpc}{mCRPC}{metastatic castrate-resistant prostate cancer}
\newacronym{epi}{EPI}{echo-planar imaging}
\newacronym{iqr}{IQR}{inter-quartile range}
\newacronym{wbdwi}{WBDWI}{whole-body diffusion-weighted MRI}
\newacronym{ai}{AI}{artificial intelligence}
\newacronym{hmc}{HMC}{Hamiltonean Monte Carlo}
\newacronym{ba}{BA}{Bland-Altman}
\newacronym{po}{$PO_{10}$}{Posterior Odds}
\newacronym{bf}{$BF_{10}$}{Bayes Factors}
\newacronym{dm}{DM}{Dirichlet-Multinomial}
\DeclareMathOperator*{\argmax}{arg\,max}
\title{Generalizing imaging biomarker repeatability studies using Bayesian inference: Applications in
    detecting heterogeneous treatment response in whole-body diffusion-weighted MRI of metastatic prostate cancer}
\author[1,*]{Matthew D~Blackledge}
\author[2,3]{Konstantinos~Zormpas-Petridis}
\author[1,4]{Ricardo~Donners}
\author[1]{Antonio~Candito}
\author[1,5]{David J~Collins}
\author[1,5]{Johann de~Bono}
\author[1,5]{Chris~Parker}
\author[1,5]{Dow-Mu~Koh}
\author[1,5]{Nina~Tunariu}
\affil[1]{The Institute of Cancer Research, London, United Kingdom}
\affil[2]{Fondazione Policlinico Universitario Agostino Gemelli IRCCS, Rome, Italy}
\affil[3]{Università Cattolica del Sacro Cuore, Rome, Italy}
\affil[4]{Universitätsspital Basel, Basel, Switzerland}
\affil[5]{The Royal Marsden NHS Foundation Trust, London, United Kingdom \vspace{6mm}}
\affil[*]{Correspondence: \texttt{matthew.blackledge@icr.ac.uk}; 15 Cotswold Road, Sutton, SM2 5NG, United Kingdom}
\date{}
\begin{document}
\maketitle

\begin{abstract}
The assessment of imaging biomarkers is critical for advancing precision medicine and improving disease
characterization.
Despite the availability of methods to derive disease heterogeneity metrics in imaging studies, a robust framework for
evaluating measurement uncertainty remains underdeveloped.
To address this gap, we propose a novel Bayesian framework to assess the precision of disease heterogeneity measures in
biomarker studies.

Our approach extends traditional methods for evaluating biomarker precision by providing greater flexibility in
statistical assumptions and enabling the analysis of biomarkers beyond univariate or multivariate normally-distributed
variables.
Using Hamiltonian Monte Carlo (HMC) sampling, the framework supports both, for example, normally-distributed and
Dirichlet-Multinomial distributed variables, enabling the derivation of posterior distributions for biomarker parameters
under diverse model assumptions.
Designed to be broadly applicable across various imaging modalities and biomarker types, the framework builds a
foundation for generalizing reproducible and objective biomarker evaluation.

To demonstrate its utility, we apply the framework to whole-body diffusion-weighted MRI (WBDWI) to assess heterogeneous
therapeutic responses in metastatic bone disease.
Specifically, we analyze data from two patient studies investigating treatments for metastatic castrate-resistant
prostate cancer (mCRPC).
Mixed therapeutic responses, recognized as a sign of disease heterogeneity with potential adverse clinical implications,
are part of the Prostate Cancer Working Group 3.0 criteria.
Our results reveal an approximately 70\% response rate among individual tumors across both studies, objectively
characterizing differential responses to systemic therapies and validating the clinical relevance of the proposed
methodology.

This Bayesian framework provides a powerful tool for advancing biomarker research across diverse imaging-based studies
while offering valuable insights into specific clinical applications, such as mCRPC treatment response.
\end{abstract}

%% Keywords
\keywords{Imaging Biomarkers\and Bayesian Inference\and Diffusion-Weighted MRI\and Prostate Cancer}

\section{Introduction}\label{sec:introduction}
There is a clear trend that medical imaging is become more quantitative; images are not only being used for
qualitative radiological evaluation, but more frequently to derive \emph{imaging biomarkers} (IBs) that act as
non-invasive surrogate measurements of biological and/or pathogenic characteristics of the imaged
tissue~\cite{Gillies:2016wk, OConner_IB}.
An example of a widley used biomarker in oncology is tumour size measurement defined by RECIST, where the
sum of the longest diameters in up to five target lesions (maximum two per organ) is recorded and monitored throughout
the patient treatment pathway~\cite{Eisenhauer2009}.
As imaging technologies advance and acquisition times accelerate, this has facilitated increased
data collection to derive multiple IBs in a single patient study.
For example, magnetic resonance imaging (MRI) offers measurement of tumour cellularity with diffusion-weighted imaging
(DWI)~\cite{Charles-Edwards:2006tb, koh_cancer, Surov:2017vk}, tissue stiffness using elastography
(MRE)~\cite{Manduca:2021wp,sack_mre}, and tumour vasculature through dynamic contrast-enhanced imaging
(DCE)~\cite{Leach:2012tt,DCE-review}.

Whilst this presents practical challenges to radiologists needing to evaluate more images per patient study, it also
offers multi-faceted insight into how the tumour microenvironment responds to therapy~\cite{Hoffmann:2024ws}.
Cancer is an intrinsically complex, heterogeneous and evolving disease~\cite{McGranahan:2015vr}.
Within such an ecosystem, molecularly and morphologically-distinct tumour regions (“habitats”) interact and adapt
differently to selection pressures (including therapy) to gain evolutionary fitness
advantages~\cite{Maley:2017wa, Amend:2015vn}.
These changes may differ between spatially isolated sites of disease (inter-lesion heterogeneity) or indeed within a
single lesion (intra-lesion heterogeneity).
Recent techniques have been developed to combine features from multiple quantitative imaging biomarkers and identify
these spatially distinct regions, while validating these features using ex-vivo histopathological
assessment~\cite{Sala:2017vi, Crispin-Ortuzar:2020us, Beer:2021uo, Zormpas-Petridis:2020ti}.
A key advantage of such habitat mapping is that it can cover the entire tumour volume and non-invasively monitor changes
in the habitats across treatment.
This is in stark contrast to tissue biopsies, which although offer great biological detail, cannot sample every location
within the tumour and is therefore unable to reflect the full range of tissue characteristics before and after treatment
~\cite{Panico:2023wd}.
The habitats derived from such techniques can be viewed as representing new forms of imaging biomarkers that equally
require validation if they are to be used in the clinic.

An important question remains as to how to best treat quantification of habitats and their change over treatment within
the statistical framework of imaging biomarkers.
For example, a vital characteristic of any IB intended to monitor cancer response is that it be robust and
precise~\cite{obuchowski_intero_change,OBUCHOWSKI2022543}.
If the IB measurement is found to be highly sensitive to imaging/patient set-up, statistical image noise, or operator
variability, then it might be that the IB is not sensitive or robust enough to detect subtle changes occurring
within tumours.
The importance of IB precision has been highlighted in the Cancer Research UK (CRUK) and the European Organisation for
Research and Treatment of Cancer (EORTC) IB roadmap~\cite{OConner_IB}, where evaluation of measurement precision is a
core requirement at every step of technical validation, from IB discovery to clinical implementation.
The current gold-standard for measuring precision of IBs are dedicated double baseline (repeatability)
studies~\cite{Winfield:2017wq, Winfield:2019vy}.
Patients are asked to undergo the same scan twice with an identical patient set-up performed each time, preferably
prior to any medical intervention, to ensure that any detected changes characterize the precision of the imaging device.
Standard statistical methodologies such as Bland-Altman analysis are then used to characterize measurement precision
and subsequently applied in patient investigations to monitor tumour change following treatment once the precision of
the measurement is known.
A limitation of current approaches to assessing repeatability is that any measured IB, $x$, is typically assumed to be
a real, univariate or multivariate variable ($x \in \mathbb{R}$ or $x \in \mathbb{R}^{D}$).
Although this may be justifiable in many cases, it limits the scope of IBs from imaging studies and does not
directly apply to monitoring changes in intra-tumoral heterogeneity and habitat maps.

Here we propose a Bayesian paradigm to determine general precision in biomarker studies.
In particular, this methodology broadens the scope of IB precision studies to include biomarkers that could be another
category of variable.
For example, in the case of habitat imaging we hypothesize that IBs are better described by the unit simplex.

In the first section of our paper, we present the theoretical foundation of our approach, structured around four key
innovations:
\begin{enumerate}
    \item {\bf{Equivalence to Conventional Methods}}.
    We establish, through theoretical arguments, that this Bayesian approach is equivalent to conventional methods such
    as \acrfull{ba} analysis~\cite{MartinBland1986} under reasonable assumptions.
    Importantly, we demonstrate the mathematical equivalence between \acrshort{ba} analysis and the broader concept of
    novelty detection when applied to real-valued biomarkers.
    \item {\bf{Bayesian Novelty Detection Framework}}.
    We propose a general framework for Bayesian novelty detection applicable to arbitrary response biomarkers.
    This approach leverages statistical modeling of posterior predictive samples under the null hypothesis of no
    biomarker change.
    \item {\bf{Bayesian Metrics for Change Detection}}.
    As an alternative to p-value or novelty estimation, we explore the use of \acrfull{bf} and a closely associated
    equivalent, \acrfull{po}, to assess the relative likelihood of biomarker change.
    Determination of \acrshort{po} is achieved through Bayesian mixture modeling, which simultaneously derives posterior
    distributions of the rate of inter-tumoral response heterogeneity, both at the population and patient levels.
    \item {\bf{Modeling Intra-Tumoral Heterogeneity}}.
    Finally, we introduce our mathematical model for quantifying intra-tumoral response heterogeneity.
    The precision of this biomarker is modeled using a \acrfull{dm} distribution, providing robust statistical insights
    into intra-tumoral response variability.
\end{enumerate}

To validate the robustness of \acrfull{hmc} in our proposed framework and models, we first conduct comprehensive
simulation studies.
These simulations model pre- and post-treatment biomarker changes, utilizing both conventional real-valued
biomarkers and \acrshort{dm} variates representing intra-tumoral heterogeneity.
The simulations are performed across a range of plausible biomarker values to ensure their applicability and
reliability under realistic conditions.

Finally, we provide a proof-of-concept demonstration of these techniques for assessing heterogeneous responses in
\acrfull{mcrpc}.
Specifically, we use estimates of \acrfull{adc} from \acrfull{wbdwi} to compare \acrshort{dm}-based biomarkers, which
capture intra-tumoral heterogeneity, with conventional median ADC estimates.
The documentation of mixed imaging responses to treatment has been highlighted as a key recommendation by the Prostate
Cancer Working Group~\cite{Scher2016}.
This reflects the growing recognition of disease heterogeneity in \acrshort{mcrpc}, as well as the coexistence of mixed
responses to therapy and overall clinical benefit.
\acrshort{wbdwi} is emerging as a core imaging biomarker for evaluating therapeutic responses in bone metastases, a
critical area where accurate imaging techniques for assessing therapeutic benefit remain
limited~\cite{Blackledge2014, Padhani2017}.
However, the evaluation of heterogeneous changes in these imaging studies has been challenging to date.
Our proposed techniques aim to address this gap by enabling a more nuanced assessment of intra-tumoral response
heterogeneity.

\section{Theory}\label{sec:theory}
\subsection{Introduction to biomarker measurement repeatability}\label{subsec:blandaltman}
In this article we use the notation $y_{itk}$ to represent the $k$-th biomarker \emph{measurement}, as measured by some
medical device, for subject $i\in\{1, 2, \dots, N\}$ at time $t\in \{0, 1, \dots, T\}$.
We use $t=0$ to indicate that the measurement made before treatment administration (baseline).
To simplify notation, if index $k$ is dropped, this indicates that only a single measurement is available
($y_{it} \equiv y_{it1}$).
Furthermore, if index $i$ is dropped, this indicates a single patient observation ($y_{t} \equiv y_{it}$).
The measurement is assumed to be an estimate of the desired true biomarker of interest $x_{it}$, corrupted by bias
$\beta$ and stochastic measurement error $\sigma$.
The measurement error $\sigma$ may be due to a number of factors, including inherent noise in the measurement system,
inter- and intra-operator variability (e.g.\ defining the target \acrfull{roi} in imaging studies), or patient movement
during a scan.
Typically, all biomarkers are considered to be single, real-values numbers, $y \in \mathbb{R}$, such that:
\[
y_{itk} = \beta_{0} + \beta_{1}x_{it} + \varepsilon_{itk}
\]
where $\varepsilon_{itk}$ is sampled from a zero-mean normal distribution with standard deviation
$\sigma \in \mathbb{R}^{+}$, and $\beta_{0}$ and $\beta_{1}$ represent fixed and proportional bias terms
respectively~\cite{obuchowski_intero_change}.
Finally, we use $\alpha_{t}$ to represent the parameters that describe the population distribution of true biomarker
values $x_{t}$ at time $t$.

A common task in imaging biomarker research is to quantify measurement error, $\sigma$, so that it may be accounted for
when using biomarkers for patient diagnosis, treatment response evaluation, and population screening~\cite{OConner_IB}.
The most popular approach for characterising measurement error is that of Bland and Altman~\cite{MartinBland1986}, which
typically relies on access to two or more measurements of the same subject before treatment.
Here we use index $j$ to indicate those patients who have received a double baseline measurement:
$\mathbf{y}_{j0} = (y_{j01}, y_{j02})$ for subject $j \in \{1, 2, \dots, N_{b}\}$.
An estimate of measurement error is made using the method of least-squares:
\begin{equation} \label{eq:sigma_lsq}
\hat{\sigma} = \sqrt{\frac{1}{2N_{b}}\sum\limits_{j=1}^{N_{b}}\left(y_{j02} - y_{j01}\right)^{2}}
\end{equation}
which makes assumptions that the expected difference is zero, $\mathbb{E}\left\{Y_{02} - Y_{01}\right\} = 0$, and that
the measurement error is homoskedastic (independent of the biomarker value).
It is often useful to verify these assumptions through the Bland-Altman plot, which plots values of
$\left(y_{j02} - y_{j01}\right)$ against $\left(y_{j01} + y_{j02}\right)/2$.
In certain instances it can be advantageous to use the logarithm of the biomarker to improve the assumption of
normality and homoscedasticity~\cite{EUSER2008978}.

A common use of response biomarkers is to determine whether a treatment causes a significant change in the biomarker
value.
This is typically assessed by evaluating whether the change in the biomarker measurement at a later time point
$d_{i} = y_{i1} - y_{i0}$ is sufficiently large to be considered unlikely to arise from measurement error.
The \acrfull{rc}, defined as $RC = 1.96\sqrt{2}\hat{\sigma}$, provides a symmetric threshold.
If $|d_{i}| > RC$, this indicates a significant change in the biomarker, allowing us to reject the null hypothesis
($H_{0}: x_{i0} = x_{i1}$) at a significance level of 5\%.
Additional summary statistics that are reported in repeatability studies include the \acrfull{icc} and \acrfull{cov}
defined as
\begin{equation}\label{eq:icc}
ICC = \frac{\sigma^{2}_{0}}{\sigma^{2}_{0} + \sigma^{2}} \qquad CoV = \frac{\sigma}{\mu_{0}} \times 100\%
\end{equation}
where $\mu_{0}$ and $\sigma_{0}$ are parameters that represent the population mean and standard deviation of imaging
biomarkers at baseline respectively~\cite{MartinBland1986, obuchowski_intero_change}.

A key assumption relevant to the above discussion is that the biomarkers are real-valued and normally distributed,
which is somewhat restrictive as biomarkers may not fit this paradigm.
For example, with an increase in the efficiency of MR acquisition, many studies now acquire multiple imaging biomarkers
in a single sitting, such that they become multidimensional
($y_{it} \in \mathbb{R}^{D}$ for $D>1$)~\cite{Obuchowski:2023vj}.
Furthermore, biomarkers such as fat fraction and \acrshort{adc} are known to be limited to certain
value ranges ($x_{it} \in [0,1]$ for fat fraction and $x_{it} \in [0,3\times 10^{-3}\text{ mm}^{2}/\text{s}]$) that may
bring the assumption of normally distributed variables into question.
Here, we provide an alternative Bayesian framework that might be used to overcome these limitations for use in imaging
biomarker research.

\subsection{Bayesian interpretation of imaging biomarker changes}\label{subsec:bayesian}
As illustrated in Figure~\ref{fig:models}, we consider the problem of detecting a significant change in a given imaging
biomarker after treatment as a decision between two competing models for the acquired data: Model $M_{0}$ assumes that
the true biomarker value at times $t = 0$ and $t = 1$ is the same ($M_{0}: x_{i1} = x_{i0}$), whilst for Model $M_{1}$
this is not the case ($M_{1}: x_{i1} \neq x_{i0}$).
Both models share parameters $\alpha_{0}$, which describe the distribution of biomarker values at baseline for the
population, measurement bias, $\beta$, and measurement error, $\sigma$.
Model $M_{1}$ contains an additional set of parameters, $\alpha_{\Delta}$, that describe the distribution of biomarker
\emph{changes} across the population.
It also permits biomarkers that may be Markovian where $X_{1} \sim p(x_{1} | x_{0}, \alpha_{\Delta}, M_{1})$, though
this is not essential.
Explicit examples of these parameters are provided in the following sections.
\begin{figure}[!htpb]
  \begin{center}
    \begin{tikzpicture}

% Model M0
\node[obs] (y0) {$y_{i0}$};
\node[obs, right=of y0] (y1) {$y_{i1}$};
\node[latent] (x0) at ($(y0)!0.5!(y1) + (0,1.3cm)$) {$x_{i0}$};
\node[latent, left=of x0, xshift=-0.3cm] (a0) {$\alpha_{0}$};
\node[latent, below=of y0, yshift=0.2cm] (beta) {$\beta$};
\node[latent, below=of y1, yshift=0.2cm] (sigma) {$\sigma$};
\plate {plate3} {(x0)(y0)(y1)} {$N$};
\edge {a0}{x0}
\edge {beta, sigma, x0}{y0}
\edge {beta, sigma, x0}{y1}
\node[above=7pt of plate3.north, xshift=-1cm] (model_0_label) {Model M$_{0}$};

% Model M1
\begin{scope}[xshift=4.5cm]
\node[obs] (y0) {$y_{i0}$};
\node[obs, right=of y0] (y1) {$y_{i1}$};
\node[latent] (x0) at ($(y0)!0.0!(y1) + (0,1.3cm)$) {$x_{i0}$};
\node[latent] (x1) at ($(y0)!1.0!(y1) + (0,1.3cm)$) {$x_{i1}$};
\node[latent, left=of x0, xshift=+0.4cm] (a0) {$\alpha_{0}$};
\node[latent, above=of x1, yshift=-0.4cm] (ad) {$\alpha_{\Delta}$};
\node[latent, below=of y0, yshift=0.2cm] (beta) {$\beta$};
\node[latent, below=of y1, yshift=0.2cm] (sigma) {$\sigma$};
\plate {plate3} {(x0)(y0)(y1)} {$N$};
\edge {a0}{x0}
\edge {ad, x0}{x1}
\edge {beta, sigma, x0}{y0}
\edge {beta, sigma, x1}{y1}
\node[above=7pt of plate3.north, xshift=-1cm] (model_1_label) {Model M$_{1}$};
\end{scope}

\end{tikzpicture}
  \end{center}
  \caption{Bayesian networks of two competing models for biomarker measurements made and before, $y_{i0}$, and after,
      $y_{i1}$, some intervention for subject $i$.}
  \label{fig:models}
\end{figure}
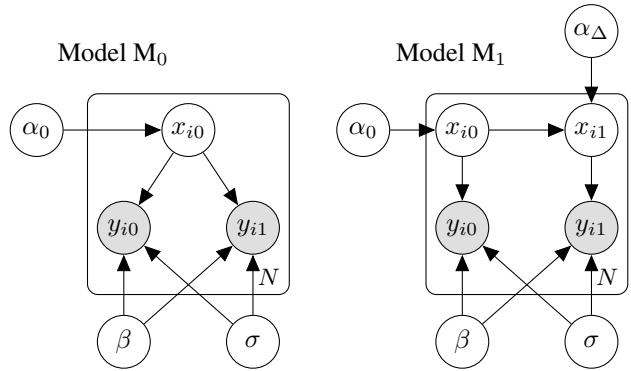

While bias is important in biomarker measurement, it is worth noting here that in the context of
measuring post-treatment change, working with biased measurements presents the same fundamental challenge under certain
assumptions: If we assume bias to be a generalised function such that $\beta: x_{t} \rightarrow x'_{t}$, and assume
that this function is injective (one-to-one) and invertible, then models $M_{0}$ and $M_{1}$ may equivalently be
represented as $M_{0}: x'_{1} = x'_{0}$ and $M_{1}: x'_{1} \neq x'_{0}$.
In the rest of this article, therefore, we do not distinguish between $x$ and $x'$ and use the notation interchangeably
and without loss of generality.

Note that although both models only consider pairs of pre- and post-treatment measurements, $t \in \{0, 1\}$, some
study designs might be interested in biomarker measurements at multiple discreet time intervals,
$t \in \{0, 1, \dots, T\}$, or indeed in cases with arbitrary time intervals, in which case $t \in \mathbb{R}^{+}$.
These use-cases will require further development, but our arguments may also be considered useful in these contexts.

In this paper, we consider two approaches for selecting the model that best explains the observed data.
First, we investigate anomaly detection in post-treatment measurements under model $M_{0}$ serving as a Bayesian
counterpart to null hypothesis testing traditionally used in double-baseline repeatability studies.
Second, we explore the use of \acrfull{bf} and the closely related \acrfull{po} as methods for directly comparing the
likelihoods of the observed data under competing models.
We highlight that an advantage of the latter approach is its ability to incorporate prior knowledge about expected
changes following treatment into the decision-making process.
However, this advantage is accompanied by the added complexity of needing to estimate these expected changes, typically
requiring data from clinical trials.

To compare these approaches with classical double baseline methods, we use the case of real, normally-distributed
biomarkers as a working example.
In this case, the set of baseline parameters are $\alpha_{0} = \{\mu_{0}, \sigma_{0}\}$, and the change parameters
$\alpha_{\Delta} = \{\mu_{\Delta}, \sigma_{\Delta}\}$, which are fully described in Table~\ref{tab:real_assumptions}
along with the assumed data-generating model.

\begin{table}[!htbp]
\centering
\caption{Paramters and assumed models for univariate, normally-disitributed biomarkers.
         Note: Half-Cauchy distribution, HC, assumed to have location 0. }
\begin{tabular}{|p{0.18\linewidth} | p{0.7\linewidth}|}
\hline
    {\bf{Parameter}} & {\bf{Description}} \\
\hline
    $\mu_{0} \in \mathbb{R}$ & Average baseline biomarker value across the population. \\
\hline
    $\sigma_{0} \in \mathbb{R}^{+}$ & Standard-deviation of baseline biomarker values across the population. \\
\hline
    $\mu_{\Delta} \in \mathbb{R}$ & The average change (effect) in the biomarker after treatment. \\
\hline
    $\sigma_{\Delta} \in \mathbb{R}^{+}$ & The standard deviation (heterogeneity) of population biomarker changes. \\
\hline
    $\sigma \in \mathbb{R}^{+}$ & The standard deviation of zero-mean normally-distributed noise (measurement uncertainty). \\
\hline
    $\sigma_{\mu} \in \mathbb{R}^{+}$ & Standard deviation of normal prior used when estimating $\mu_{0}$ and $\mu_{\Delta}$. \\
\hline
    $\gamma_{\sigma} \in \mathbb{R}^{+}$ & Scale of the half-Cauchy prior used when estimating $\sigma_{0}$, $\sigma_{\Delta}$, and $\sigma$. \\
\hline
\hline
    \multicolumn{2}{|c|}{\bf{Model Distributions}} \\
\hline
    \multicolumn{2}{|c|}{$\mu_{0} \sim \mathcal{N}(0, \sigma_{\mu}) \quad \mu_{\Delta} \sim \mathcal{N}(0,  \sigma_{\mu})$} \\
    \multicolumn{2}{|c|}{$\sigma_{0} \sim \text{HC}(\gamma_{\sigma}) \quad \sigma_{\Delta} \sim \text{HC}(\gamma_{\sigma}) \quad \sigma \sim \text{HC}(\gamma_{\sigma})$} \\
    \multicolumn{2}{|c|}{$x_{i0} \sim \mathcal{N}(\mu_{0}, \sigma_{0})$} \\
    \multicolumn{2}{|c|}{$x_{i1} \sim \mathcal{N}(x_{i0} + \mu_{\Delta}, \sigma_{\Delta})$ if $M_{1}$ is true \textbf{else} $x_{i1} = x_{i0}$} \\
    \multicolumn{2}{|c|}{$y_{it} \sim \mathcal{N}(x_{it}, \sigma) $}\\
    \multicolumn{2}{|l|}{where}\\
    \multicolumn{2}{|c|}{$\mathcal{N}(x; a, b) = (2\pi b^{2})^{-1/2}\exp\{- (x - a)^{2} / 2b^{2} \}$}\\
    \multicolumn{2}{|c|}{$\text{HC}(x; \gamma_{\sigma}) = 2\left[ 1+\left(x/\gamma_{\sigma}\right)^{2} \right]/(\pi\gamma_{\sigma}), \quad x \geq 0$}\\
\hline
\end{tabular}
\label{tab:real_assumptions}
\end{table}

\subsection{Assessing change through anomaly detection}\label{subsec:anomaly_detection}
Conventional assessment of response biomarkers for individual patients follows the paradigm of anomaly detection.
In the framework presented above this generally involves estimation of the parameters in model $M_{0}$, and then
determining whether a change observed after treatment would represent a significant anomaly if model $M_{0}$ still
holds.
This is summarized by a \emph{p-value}, which represents the probability of observing the post-treatment measurement—or
a measurement as ``extreme'' as the post-treatment measurement—under the null hypothesis that no real change has
occurred (i.e. $M_{0}$ is true).
Formerly, we wish to evaluate
\begin{equation} \label{eq: pval}
\text{p} = \Pr(T(y) \geq T(y_{1})\ |\ y_{0}, \mathbf{y}_{b}, M_{0})
\end{equation}
where $\mathbf{y}_{b}$ represent the repeat baseline measurements used to determine the parameters of $M_{0}$, and
$T(\cdot)$ represents some test statistic on observed data that describes how unlikely observed post-treatment data are
if $M_{0}$ is true.
Exact calculation of equation~\ref{eq: pval} requires knowledge of the marginal distribution of $y$ conditional on all
observed data
\begin{align}
p(&y | y_{0}, \mathbf{y}_{b},M_{0}) = \notag \\
&\int\limits_{\alpha_{0}}\int\limits_{\sigma}p(y | y_{0}, \alpha_{0}, \sigma, M_{0})p(\alpha_{0}| y_{0},\mathbf{y}_{b},
M_{0})p(\sigma|\mathbf{y}_{b}, M_{0})\text{d}\alpha_{0}\text{d}\sigma \label{eq:conditional}
\end{align}
In all but a handful of cases, it is generally not possible to directly evaluate this integral.
In the following two subsections we demonstrate how conventional repeatability analysis may be considered a special case
of this formulation, and then discuss how numerical techniques can be used to solve this problem in a general setting.

\subsubsection{Analytical solutions}\label{subsubsec:anomaly_detection_analytical}
In the working example of normally distributed biomarkers with $\alpha_{0} = \left(\mu_{0}, \sigma_{0}\right)$ the
convention is to assume that these parameters, along with error $\sigma$, are known \emph{a-priori} (estimated from
double baseline data for example).
In the Supplementary Material, theorem~\ref{theorem:conditional_posterior} we prove that under these assumptions, the
following conditional probability is true.
\begin{equation} \label{eq:conditional_gaussian}
p(y | y_{0}, \sigma, \mu_{0}, \sigma_{0}, M_{0}) = \mathcal{N}\left(y; \mu^{*}, \sigma^{*} \right)
\end{equation}
where
\[
\mu^{*} = ICC\cdot y_{0} + (1-ICC)\cdot \mu_{0} \quad \sigma^{*} = \sqrt{1+ICC}\cdot \sigma
\]
If the variance of baseline biomarker values amongst the population is much larger than the measurement uncertainty
($\sigma_{0} \gg \sigma$) such that $ICC \rightarrow 1$, then we obtain the classical result
$p(y | y_{0}, \sigma, M_{0}) = \mathcal{N}\left(y; y_{0}, \sqrt{2}\sigma\right)$.
This may be re-written as $p(d | \sigma, M_{0}) = \mathcal{N}\left(d; 0, \sqrt{2}\sigma\right)$ where $d = y - y_{0}$.
When determining whether a change following treatment is significant, the test static is the
$z$-statistic, $z = (y_{1} - y_{0}) / \sqrt{2}\sigma$, such that for $\text{p}< 0.05$ in a two-sided hypothesis test we
require $|d| > 1.96\sqrt{2}\sigma = RC$ as discussed earlier (Section~\ref{subsec:blandaltman}).

Although simpler derivations of this final result are possible, equation~\ref{eq:conditional_gaussian}
provides a transparent formulation; it demonstrates a key assumption made throughout conventional repeatability
analysis that $ICC \rightarrow 1$ indicating that significant changes after treatment are independent of baseline value.
This may be a poor assumption to make.
For example, in the case of changes in ADC in bone cancer, a change from 0.8 to 1.5 $\times 10^{-3}$ s/mm$^{2}$ might
have a different biological interpretation (reduction in tumour cellularity) than a change from 0.3 to 1.0
$\times 10^{-3}$ s/mm$^{2}$ (a potential region of new disease in an area of displaced yellow marrow).
The \acrshort{icc}, along with $\mu_{0}$, provide some context to these changes as they describe the distribution
of expected baseline measurements for a patient cohort.

As a corollary, if we continue with the assumption that $ICC\rightarrow 1$ it is also relatively straightforward
to marginalize over the uncertainty of $\sigma$ and obtain the precise form of equation~\ref{eq:conditional} (see
Supplementary Material, corollary~\ref{theorem:studentt}):
\[
p(y | y_{0}, \textbf{y}_{b}, M_{0}) = \frac{\Gamma\left(\frac{N_{b} + 1}{2}\right)}{\sqrt{\pi N_{b}}\Gamma\left(\frac{N_{b}}{2}\right)}
   \left( 1 + \frac{t^{2}}{N_{b}} \right)^{-\frac{N_{b}+1}{2}}
\]
where
\begin{gather*}
t = \frac{y - y_{0}}{\sqrt{\frac{1}{N_{b}}\sum_{i}\left(y_{j02} -
    y_{j01}\right)^{2}}} = \frac{d}{\sqrt{2}\hat{\sigma}}
\end{gather*}
Specifically, $p(y|y_0, \mathbf{y}_b, M_0)$ follows a Student-$t$ distribution with $N_{b}$ degrees of freedom.
Significant change following treatment is determined by assessing whether $t$ exceeds a threshold defined by the
Student-$t$ distribution at the chosen significance level.
This corresponds to the condition $|d| > \gamma \sqrt{2} \hat{\sigma}$ where $\gamma$ is a factor dependent on $N_{b}$
and the specified significance level.
Notably, $\gamma \to 1.96$ as $N_{b} \to \infty$ for $p = 0.05$.

\subsubsection{Numerical solutions}\label{subsubsec:anomaly_detection_numerical}
Although exact marginalization of parameters $\alpha_{0}$ and $\sigma$ in equation~\ref{eq:conditional} is generally not
possible, it can be achieved numerically through \acrfull{mcmc} sampling techniques.
Using this approach we draw samples, $s \in \{1, 2, \dots, M_{s}\}$, of the parameters of interest from their
conditional posterior distribution,
$\alpha^{s}_{0}, x_{0}^{s}, \sigma^{s} \sim p(\alpha_{0}, x_{0}, \sigma | \alpha^{s-1}_{0}, x^{s-1}_{0}, \sigma^{s-1}, y_{0},\textbf{y}_{b}) $,
and subsequently draw posterior predictive data samples from the assumed likelihood distribution for model $M_{0}$,
$y^{s} \sim p(y | x_{0}^{s}, \sigma^{s},M_{0})$; the complete set of $y^{s}$ are then samples from the marginalised
distribution:
$p(y | y_{0}, \mathbf{y}_{b}, M_{0})$.

\vspace{5pt}
\underline{\emph{Pseudocode}}
\begin{algorithmic}
\State $\alpha^{0}_{0}, x_{0}^{0}, \sigma^{0}$ = Initial parameter estimates
\State $N_{s}$ = number of samples
\For{$s \in \{1, 2, \dots, N_{s} \}$}
\State $\alpha_{0}^{s} \sim p(\alpha_{0} | \alpha_{0}^{s-1}, x_{0}^{s-1}, \sigma^{s-1}, y_{0}, \mathbf{y}_{b})$
\State $x_{0}^{s} \sim p(x_{0} | \alpha_{0}^{s}, x_{0}^{s-1}, \sigma^{s-1}, y_{0}, \mathbf{y}_{b})$
\State $\sigma^{s} \sim p(\sigma | \alpha_{0}^{s}, x_{0}^{s}, \sigma^{s-1}, y_{0}, \mathbf{y}_{b})$
\State $y^{s} \sim p(y | x_{0}^{s}, \sigma, M_{0})$
\EndFor\\
\Return $\mathbf{y}^{s}=\{y^{1},\cdots,y^{N_{s}}\}$
\end{algorithmic}
\vspace{5pt}

Initial parameter estimates $\alpha_0$, $x_0$, and $\sigma_0$ are highly model-dependent but can, for instance, be
obtained using maximum likelihood estimation to improve convergence speed.
It is standard practice to discard or ``burn'' an initial portion of the samples to minimize the influence of the
starting conditions on subsequent samples.
Additionally, samples are typically generated across two or more independent chains to ensure that the model sampling
has converged—meaning the chains appear to sample parameters from the same underlying distribution.
When multiple chains are used, the Gelman-Rubin statistic $\hat{R}$ can be employed to assess convergence.
A value of $\hat{R}$ close to 1 indicates that the chains have converged well~\cite{gelman1992inference}.

Novelty detection using \acrshort{mcmc} sampling is more involved than for the analytical solutions presented above.
It may involve analysis of biomarkers other than real variables, or involve probability distributions other than
the univariate/multivariate normal distributions.
Our approach is to approximate the posterior predictive distribution (equation\ref{eq:conditional}) as some suitable
density estimation derived from the posterior predictive samples $\mathbf{y}^{s}$
\[
p(y|y_{0}, \mathbf{y}_{b}, M_{0})\approx\widehat{p}_{s}(y|\mathbf{y}^{s})
\]
Density approximation $\widehat{p}_{s}$ might be based on kernel density estimation or, as we employ, a relevant
finite mixture model.
By estimating the maximum location of the posterior predictive distribution
\[
y^{\dagger} = \argmax_{y}\left[\widehat{p}_{s}(y|\mathbf{y}^{s})\right]
\]
we define the ``kernel distance'', $D(y)\in\mathbb{R}^{+}$:
\[
D(y) = \left|\widehat{p}_{s}(y^{\dagger}|\mathbf{y}^{s}) - \widehat{p}_{s}(y|\mathbf{y}^{s})\right|
\]
Computing this distance for all samples generated via the \acrshort{mcmc} algorithm, $D(\mathbf{y}^{s})$, the credible
region is the set of all $y$ for which $D(y) > (1-\alpha)^{\text{th}}$ percentile of $D(\mathbf{y}^{s})$, where
$\alpha$ represents some desired significance level (e.g. $\alpha$ = 0.05).
Any post-treatment measurement, $y_{1}$, that lies outside this region is then considered to be an anomaly.
An example of an estimated credible region defined for synthetic data simulated from a multivariate normal distribution
is illustrated in Figure~\ref{fig:kernel_dist}.
\begin{figure}[!htpb]
   \centering
   \includegraphics[width = 0.75\linewidth]{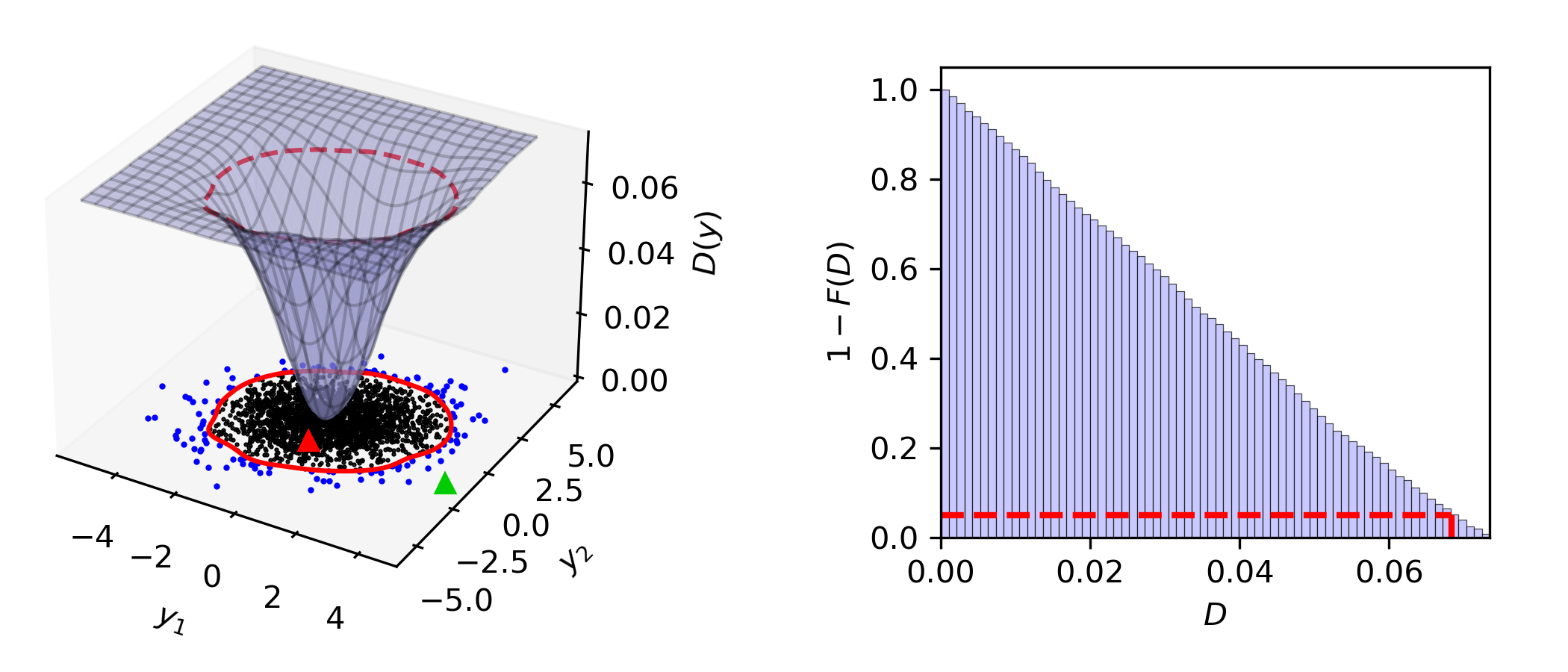}
   \caption{Example 95\% credible region (red contour) for 5000 samples from a 2-dimensional normal distribution,
      computed using the kernel distance, $D$ (blue surface). \acrshort{mcmc} samples that are found to be outside the
      credible regions are color-coded blue, whilst those inside are color-coded black. Triangles represent example
      data from a post-treatment measurement; the green triangle shows a case that would be considered a significant
      change compared to baseline, whilst the red triangle illustrates a case for which there is insufficient evidence
      for a significant change. In this example, we use an isotropic Gaussian kernel with bandwidth $h=0.5$ to generate
      kernel density estimate.}
   \label{fig:kernel_dist}
\end{figure}

\subsection{Bayes factors}\label{subsec:bayes_factors}
As an alternative to novelty detection, the Bayes factor compares the evidence in favour of both models $M_{0}$ and
$M_{1}$ by calculating the ratio of the marginal likelihoods of the observed data, $\mathbf{y}$:
\begin{equation}\label{eq:bf}
BF_{10} = \frac{p(\mathbf{y}|M_{1})}{p(\mathbf{y}|M_{0})} = \frac{\int_{\alpha_{1}}p(\mathbf{y}|\alpha_{1}, M_{1})
   p(\alpha_{1})\text{d}\alpha_{1}}{\int_{\alpha_{0}}p(\mathbf{y}|\alpha_{0}, M_{0})p(\alpha_{0})\text{d}\alpha_{0}}
\end{equation}
for $BF_{10} \in \mathbb{R}^{+}$~\cite{gronau_bridge}.
A theoretical advantage of the Bayes factor is that it allows one to both reject the null model, $M_{0}$, in favour of
$M_{1}$ when $BF_{10} > 1$ or confirm its existence when $BF_{10} < 1$.
Moreover, if $BF_{10} \approx 1$ to within some appropriate tolerance, this indicates that inadequate data is present
to choose between either model.
As when using p-values during statistical significance testing,  there are no fixed rules when using Bayes factors to
interpret results of some experiment; the final decision should depend on the context in which the experiment was
performed and might be influenced by supplementary information.
Our aim in this article is to outline methods for estimating the Bayes factor within repeatability studies as a tool to
aid measurement of the magnitude of change in biomarkers following treatment.
Whether this change reaches some threshold to warrant any clinical decision should always include other factors
such as patient history, bloods markers, treatment selection, and expected patient benefit.
That said, several experts have suggested approximate schemes for decision-making based on Bayes factors, including a
framework by Lee and Wagenmakers, as summarized in Table~\ref{tab:bf10}~\cite{lee2014bayesian}.

\begin{table}[!htpb]
\caption{Interpretation thresholds for Bayes factors suggested by Jeffreys}
\begin{center}
{
\renewcommand{\arraystretch}{1.25}
\begin{tabular}{|c|l|}
\hline
$\mathbf{BF_{10}}$ & \multicolumn{1}{|c|}{\textbf{Interpretation}} \\
\hline
$>$ 30 & Very strong evidence for $M_{1}$ compared to $M_{0}$\\
\hline
10--30 & Strong evidence for $M_{1}$ compared to $M_{0}$\\
\hline
3--10 & Moderate evidence for $M_{1}$ compared to $M_{0}$\\
\hline
1--3 & Anecdotal evidence for $M_{1}$ compared to $M_{0}$ \\
\hline
$\frac{1}{3}$ - 1 & Anecdotal evidence for $M_{0}$ compared to $M_{1}$ \\
\hline
$\frac{1}{10}$ - $\frac{1}{3}$ & Moderate evidence for $M_{0}$ compared to $M_{1}$ \\
\hline
$<\frac{1}{10}$ & Strong evidence for $M_{0}$ compared to $M_{1}$ \\
\hline
\end{tabular}
}
\end{center}
\label{tab:bf10}
\end{table}

\subsubsection{Analytical solutions}\label{subsubsec:bayes_factors_analytical}
For our working example of real-value normally-distributed measurements (table~\ref{tab:real_assumptions}), it is
possible to derive a formula for the Bayes factor for the measured change in a single patient, $d_{y} = y_{1}-y_{0}$,
provided the parameters for both models are known.
In the Supplementary Material, theorem~\ref{theorem:bf_gaussian_proof}, we demonstrate that the natural logarithm of
the Bayes factor in this instance is given by
\begin{equation}\label{eq:logbf_normal}
\ln (BF_{10}) = \frac{1}{2}\left(\ln (1-\eta) - (1-\eta)\left(\frac{d_{y} - \mu_{\Delta}}{\sqrt{2}\sigma} \right)^{2} +
   \left( \frac{d_{y}}{\sqrt{2}\sigma}\right)^{2}\right)
\end{equation}
where 
\[
\eta = \frac{\sigma_{\Delta}^{2}}{\sigma_{\Delta}^{2} + 2\sigma^{2}}
\]
encapsulates the magnitude of the heterogeneity of biomarker changes following treatment, $\sigma_{\Delta}$, in terms
of measurement uncertainty (described as the \acrfull{ims} in~\cite{Thrussell:2022vj}).
Some intuition of the Bayes factor can be made by investigating the functional form of its expectation when model
$M_{0}$ is true or when model $M_{1}$ is true (see Supplementary Material, corollary~\ref{theorem:expected_bf} for the
derivation):
\begin{align}
\mathbb{E}_{d_{y}|M_{0}}\{\ln (BF_{10})\} &= \frac{1}{2} \left(\ln(1-\eta) + \eta - (1-\eta)\xi^{2} \right) \notag \\
\mathbb{E}_{d_{y}|M_{1}}\{\ln (BF_{10})\} &= \frac{1}{2} \left(\ln(1-\eta) +
   \frac{\eta}{1-\eta} + \xi^{2} \right) \notag
\end{align}
where $\xi = |\mu_{\Delta}|/\sqrt{2}\sigma$ represents the \emph{effect size} of the treatment.
Plots of these functions are illustrated in Figure ~\ref{fig:expected_bf} for different values of $\eta$ and $\xi$.
There is a clear trend that as $\eta$ and $\xi$ increase, the expectation of the Bayes factor when model $M_{1}$ is
true increases, indicating better sensitivity for detecting post-treatment change.
Similarly, increasing $\xi$ reduces the expectation value when model $M_{0}$ is true indicating improved specificity
for confirming no-change.
However, it is interesting that for moderate effect sizes this specificity is lost as the heterogeneity of treatment
change increases ($\eta \rightarrow 1$).
\begin{figure}[htbp]
   \centering
   \includegraphics[width=0.75\textwidth]{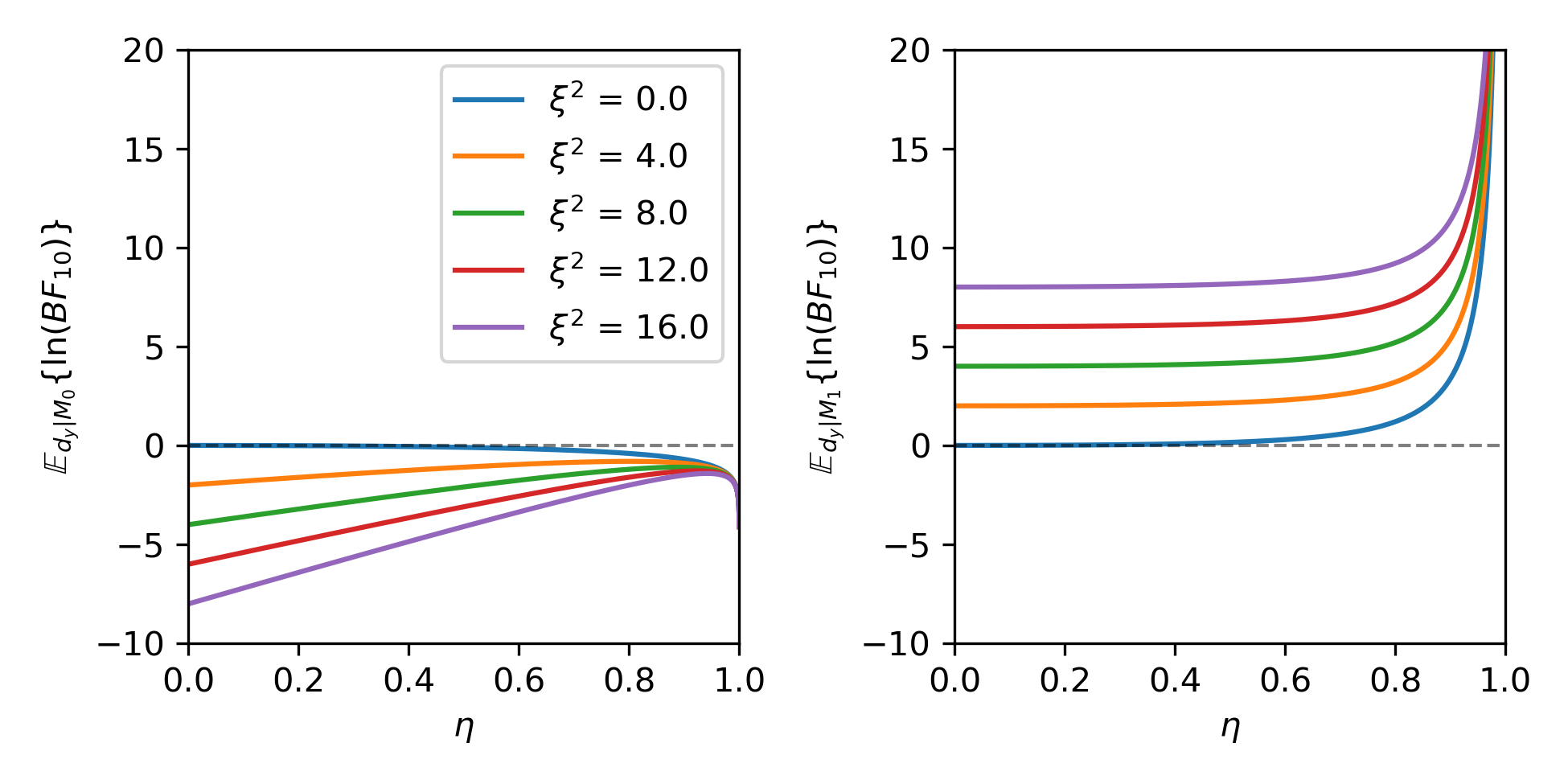}
   \caption{Variation in the expected Bayes factor for real, normally-distributed biomarker measurements.}
   \label{fig:expected_bf}
\end{figure}

\subsection{Numerical solutions with Bayesian mixture modelling}\label{subsec:mixture_modelling}
In the results presented in the preceding section we made the assumption that model $M_{1}$ parameters,
$\alpha_{1} = \left(\sigma_{\Delta}, \mu_{\Delta}\right)$ are known.
Unfortunately, unlike for $M_{0}$ and parameters $\alpha_{0}$ where double baseline experiments can be specifically
constructed such that the relevant model assumptions are somewhat guaranteed, it is difficult to acquire data that
provide direct estimation of $\alpha_{1}$.
In a clinical trial setting, for example, this would require knowledge of the response label for each patient,
$z_{i} \in \{M_{0}, M_{1}\}$.
This could, for example, be derived from other criteria such as patient overall survival, progression-free survival, or
even subjective radiologist assessment of all available patient data.
However, in many cancer trials medical imaging is used as a key criteria for early response assessment and so joint
estimation of model labels $z_{i}$ and model parameters would be preferable.
Here we propose that Bayesian mixture modelling may be used to achieve this and subsequently estimate Bayes factors for
each subject within a trial.

Consider a trial consisting of $N$ subjects, where biomarker measurements are made for each subject before and after
treatment, $\mathbf{y} = \{y_{i0}, y_{i1} \}$ for $i \in \{1, \dots, N\}$.
Furthermore, assume that double baseline measurements, $\mathbf{y}_{j0} = \{y_{j01}, y_{j02}\}$ are available
for $j \in \{1, \dots, N_{b}\}$.
These data - whether belonging to the same group, a subset, or an entirely different cohort than the main trial -
should be representative of the biomarker measurements made within the main study, and acquired in such a way that
ensures the assumption of model $M_{0}$ holds.
The prior probability of each model is now incorporated as an unknown parameter $\lambda \in (0, 1)$ such that
$p(z_{i} = M_{1}) = \lambda$ and $p(z_{i} = M_{0}) = 1-\lambda$.
The posterior distribution of all model parameters, marginalised over the unknown data labels is:
\begin{align}
p(&\alpha_{0}, \alpha_{1}, \sigma, \lambda, \mathbf{x}, \mathbf{x}_{b} | \mathbf{y}, \mathbf{y}_{b}) \propto \notag \\
& p(\alpha_{0}) p(\alpha_{1}) p(\sigma) \prod\limits_{j=1}^{N_{b}} p(y_{bj1}|x_{bj}, \sigma)p(y_{bj2}|x_{bj}, \sigma)
   p(x_{bj}|\alpha_{0}) \notag \\
&\quad \times \prod\limits_{i=1}^{N}p(y_{i0}|x_{i0}, \sigma)p(x_{i0} | \alpha_{0}) \{\lambda\cdot p(y_{i1} | x_{i1},
   \sigma)p(x_{i1} | x_{i0}, \alpha_{1})\notag\\
&\qquad\qquad + (1-\lambda)\cdot p(y_{i1} | x_{i0}, \sigma)\}\label{eq:posterior_trial}
\end{align}
where $p(\alpha_{0})$,  $p(\alpha_{1})$, and $p(\sigma)$ represent suitably chosen prior distribution on model
parameters, and $\mathbf{x}$ and $\mathbf{x}_{b}$ represent the true (unknown) biomarker values from which measurements
$\mathbf{y}$ and $\mathbf{y}_{b}$ have been acquired.
Samples from this posterior distribution may be directly sampled using Bayesian inference engines such as Stan, which
provides a sequence of samples $s \in \{1, 2, \dots, N_{s} \}$ of model parameters
($\alpha^{s}_{0}, \alpha^{s}_{1}, \sigma^{s}, \lambda^{s}, \mathbf{x}^{s}, \mathbf{x}^{bs}$).

Directly obtaining Bayes factors using this approach is complicated by not knowing the prior probability of either
model, $p(M_{0})$ or $p(M_{1}$).  We therefore use a related quantity, the \acrfull{po}, defined as
\[
PO_{10} = \frac{p(z_{i} = M_{1}|y_{i0}, y_{i1})}{p(z_{i} = M_{0}|y_{i0}, y_{i1})}
\]
which is equivalent to the Bayes factor when $P(M_{0}) = P(M_{1})$.
Although the posterior distribution for each patient label $p(z_{i} = M_{k} | y_{i0}, y_{i1})$ is not known, samples
from this distribution can be obtained during the \acrshort{mcmc} process by sampling from a categorical
distribution, conditional on all model parameters:
\[
z^{s} \sim 
\begin{cases}
    M_{0},& \text{with probability } p^{s}_{0}\\
    M_{1},& \text{with probability } 1 - p^{s}_{0}\\
\end{cases}
\]
where
\[
p^{s}_{0} = \frac{\widetilde{p^{s}_{0}}}{\widetilde{p^{s}_{0}} + \widetilde{p^{s}_{1}}}
\]
and
\begin{align}
\widetilde{p^{s}_{0}} &= p(y_{i0} | x^{s}_{i0}, \sigma^{s})p(y_{i1} | x^{s}_{i0}, \sigma^{s})(1 - \lambda^{s}) \notag \\
\widetilde{p^{s}_{1}} &= p(y_{i0} | x^{s}_{i0}, \sigma^{s})p(y_{i1} | x^{s}_{i1}, \sigma^{s}) \lambda^{s} \notag
\end{align}
Finally, we estimate the posterior odds by:
\[
PO_{10} = \frac{\hat{p}(z_{i} = M_{1} | y_{i0}, y_{i1})}{\hat{p}(z_{i} = M_{0} | y_{i0}, y_{i1})} =
\frac{\frac{1}{N_{s}}\sum_{s}(z_{i}^{s} == M_{1})}{\frac{1}{N_{s}}\sum_{s}(z_{i}^{s} == M_{0})}
\]

\subsection{Application to habitat mapping}\label{subsec:habitat_modelling}
In this article we demonstrate a use-case of the methods presented above in monitoring heterogeneous change in tumours.
There are many approaches for quantifying intra-tumoral heterogeneity in cancer.
This may include techniques like radiomics, which attempt to derive hand-crafted or deep features from images that are
designed to represent apparent texture within the images.
This approach quantifies heterogeneity within images to a set of (hopefully) independent univariate biomarkers, for
which  conventional repeatability assessments are useful.
However, radiomics features generally ignore the spatial context, since the extracted features represent the entire
image or tumor.
An alternative is ``habitat imaging'', which attempts to automatically classify subregions within the tumour that
ideally represent distinct biological phenotypes.
The methods for defining such maps are numerous, but in general they lead to a biomarker that may be considered to be
the proportion of tumour in each of $M$ possible tissue categories.
That is $x = \{x_{1}, \dots, x_{M}\}$ is a simplex with the property that $\sum_{m}x_{m} = 1$, and $x_{m}$
is the proportion of tissue belong to class $m$.
As far as we are aware, no method of repeatability evaluation for such biomarkers is available.

Our approach to obtaining response habitats is illustrated in Figure~\ref{fig:habitat_example}, inspired from the work
of previous publications~\cite{Koh:2009wy, blackledge2017assessing, dalili2017quantitative}.
Considering each tumour individually, we derive the 10\textsuperscript{th} and 90\textsuperscript{th} percentiles of
the \acrshort{adc} distribution from baseline measurements, leading to $M=3$ distinct \acrshort{adc} regions.
The number of \acrshort{adc} voxels within these regions in the tumour after treatment constitute our biomarker
measurement of interest, $y_{1} \in \mathbb{Z}^{3}$.
It is important to note that we do not directly normalize the measurement, $y_{1m}\rightarrow y_{1m}/\sum_{m}y_{1m}$, as
this leads to numerical instability wherever there are no voxels within a particular ADC range (can occur for small
tumours).
Visualization of these biomarker measurements, however, is performed following normalization to improve interpretation.
\begin{figure*}[h]
   \centering
   \includegraphics[width = \linewidth]{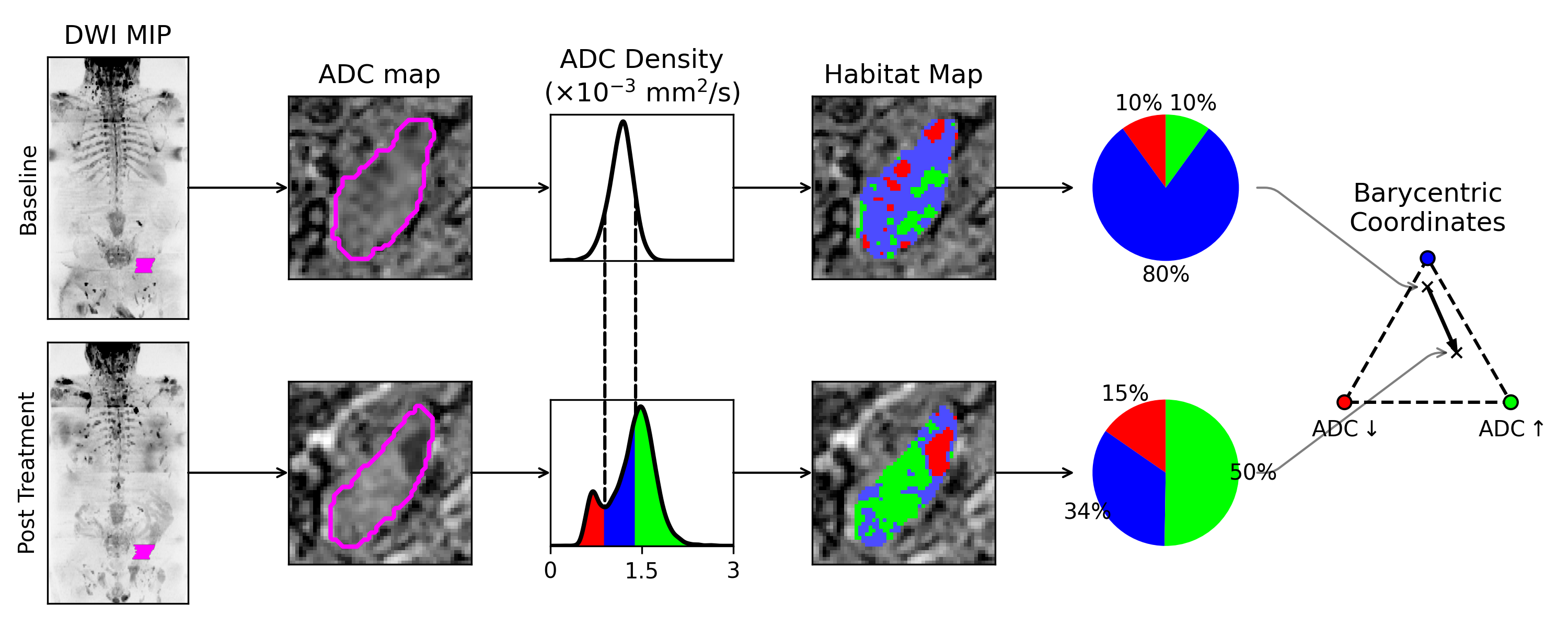}
   \caption{Our approach to habitat mapping for post-treatment assessment consists (from left to right) of: (i)
       Delineating \acrfull{rois} around individual lections on clinical imaging (represented here as a pink overlay on
       the high b-value \acrfull{mip}).
       (ii) Transferring these contours to the derived \acrshort{adc} maps.
       (iii) Deriving \acrshort{adc} percentiles from the baseline distribution and applying them to the follow-up scan
       to derive regions within the percentile ranges.
       (iv) Visualizing the location of responding regions as green where \acrshort{adc} has increased, and progressing
       regions as red where \acrshort{adc} has decreased.
       (v) Identifying the proportions of like habitat voxels as a pie-chart to represent the simplex, or (vi) using
       barycentric coordinates to monitor how the derived simplexes change following treatment.}
   \label{fig:habitat_example}
\end{figure*}

Whether a particular value of $y_{1}$ constitutes a significant change following treatment can be determined using
our theoretical framework and the availability of some double-baseline data $\mathbf{y}_{b}$.
A key assumption in our model is that $x_{0}$ is a known quantity, as it has been defined by setting the percentile
values for the baseline distribution, $x_{0} = \mu_{0} = (0.1, 0.8, 0.1)$.
Additional parameters, along with the data generating model, are presented in Table~\ref{tab:simplex_assumptions}

\begin{table}[!htbp]
\centering
\caption{Paramters and assumed models for unnormalised simplex (habitat) biomarkers.
         Note: Half-Cauchy distribution, HC, assumed to have location 0. }
\begin{tabular}{|p{0.18\linewidth} | p{0.7\linewidth}|}
\hline
    {\bf{Parameter}} & {\bf{Description}} \\
\hline
    $\mu_{0} \in \mathbb{R}^{M}$ & The assumed baseline habitat values.
    Here equal to $(0.1, 0.8, 0.1)$. \\
\hline
    $\mu_{1} \in \mathbb{R}^{M}$ & The center (average) of post-treatment values. \\
\hline
    $\tau \in \mathbb{R}^{+}$ & Precision of baseline measurements. \\
\hline
    $\tau_{1} \in \mathbb{R}^{+}$ & Concentration of population post-treatment values. \\
\hline
    $N_{v} \in \mathbb{Z}^{+}$ & Number of tumour voxels ($\sum_{m}y_{m} = N_{v}$). \\
\hline
    $\gamma_{\tau} \in \mathbb{R}^{+}$ & Scale of the half-Cauchy prior used when estimating $\tau$ and $\tau_{1}$. \\
\hline
\hline
    \multicolumn{2}{|c|}{\bf{Model Distributions}} \\
\hline
    \multicolumn{2}{|c|}{$\tau \sim \text{HC}(\gamma_{\tau})$} \\
    \multicolumn{2}{|c|}{$x_{i1} \sim \text{Dir}(\mu_{1}, \tau_{1})$ if $M_{1}$ is true \textbf{else} $x_{i1} \sim \text{Dir}(\mu_{0}, \tau)$} \\
    \multicolumn{2}{|c|}{$y_{i1} \sim \text{Mult}(x_{i1}, N_{v}) $}\\
    \multicolumn{2}{|l|}{where}\\
    \multicolumn{2}{|c|}{$\text{Dir}(x; \mu, \tau) = \frac{\Gamma(\tau)}{\prod\limits_{m}\Gamma(\tau\mu_{m})}\prod\limits_{m=1}^{M}
   x_{m}^{\tau\mu_{m} - 1}$}\\
    \multicolumn{2}{|c|}{$\text{Mult}(y; x, N_{v}) = \frac{N_{v}!}{y_{1}!\cdots y_{M}!}x_{1}^{y_{1}}\cdots x_{M}^{y_{M}}$}\\
    \multicolumn{2}{|c|}{$\text{HC}(x; \gamma_{\tau}) = 2\left[ 1+\left(x/\gamma_{\tau}\right)^{2} \right]/(\pi\gamma_{\tau})$}\\
\hline
\end{tabular}
\label{tab:simplex_assumptions}
\end{table}

\section{Materials and Methods}\label{sec:methods}
\subsection{Simulation Studies}

Data were simulated for both median ADC and habitat models (tables~\ref{tab:real_assumptions}
and~\ref{tab:simplex_assumptions} respectively) based on the parameter values and ranges outlined in
tables~\ref{tab:median_sims} and~\ref{tab:habitat_sims}.
For parameters specified as ranges, Latin hypercube sampling was employed to generate five hundred unique combinations
of these parameters, ensuring a thorough exploration of the parameter space.
For each parameter combination, data were sampled 20 times, and the models were fit using Stan with the code
provided in Supplementary Information Sections~\ref{subsec:median_adc_stan_code} and~\ref{subsec:habitat_stan_code}.
For habitat data generation, the number of voxels associated with each datum was sampled from a log-normal distribution
characterized by a mean ($\mu_{v}$) and standard deviation ($\sigma_{v}$), derived from voxel size estimates observed in
the clinical studies.

For each data sample and parameter combination, we evaluated the following diagnostics:
\begin{itemize}
 \item {\bf{Bias}}: The bias for each parameter was calculated as
       $\left(\left(y_{med}-y_{true}\right)/y_{true}\right)\times 100\%$, where $y_{true}$ is the true parameter, and
       $y_{med}$ is the median of the posterior samples for that parameter.
 \item {\bf{Coverage}}: The coverage was defined as the percentage of cases in which the true parameter value fell
       within the 95\% credible interval of the posterior samples.
 \item {\bf{$\hat{R}$ (Gelman-Rubin Diagnostic)}}: The Gelman-Rubin diagnostic~\cite{gelman1992inference} was estimated
       using the ArviZ package~\cite{kumar2019arviz} to assess the convergence of the \acrshort{hmc} chains.
 \item {{\bf{Diagnostic Accuracy}}}: For each parameter combination, the diagnostic accuracy of the posterior odds
       estimate was quantified through sensitivity, specificity, and the area under the receiver operating
       characteristic (ROC) curve (AUC). The true label of change/no change was sampled as
       $l = \text{Uniform}(0, 1) < \lambda$.
\end{itemize}

\begin{table}[t]
\centering
\caption{Parameter values and ranges for median ADC simulations.}
\begin{tabular}{|c|c|}
 \hline
 {\bf{Parameter}} & {\bf{Value/Range}} \\
 \hline
 Mixing proportion, $\lambda$ & $0.4 \rightarrow 0.95$ \\
 \hline
 IMS, $\eta = \sigma_{\Delta}^{2} / \left(\sigma_{\Delta}^{2} + 2\sigma^{2}\right)$ & $0.6 \rightarrow 0.99$  \\
 \hline
 ICC $=\sigma_{0}^{2}/\left(\sigma_{0}^{2} + \sigma^{2}\right)$ & $0.6 \rightarrow 0.99$ \\
 \hline
 Mean change, $\mu_{\Delta}$ & $-0.5 \rightarrow 2.0 \times 10^{-3}$ mm$^{2}$/s \\
 \hline
 Baseline mean, $\mu_{0}$ & $1.0 \times 10^{-3}$ mm$^{2}$/s\\
\hline
 Measurement variation, $\sigma$ & $0.05 \times 10^{-3}$ mm$^{2}$/s \\
\hline
\end{tabular}
\label{tab:median_sims}
\end{table}

\begin{table}[t]
\centering
\caption{Parameter values and ranges for habitat simulations.}
\begin{tabular}{|c|c|}
 \hline
 {\bf{Parameter}} & {\bf{Value/Range}} \\
 \hline
 Mixing proportion, $\lambda$ & $0.4 \rightarrow 0.95$ \\
 \hline
 IMS, $\eta = \tau/\left(\tau + \tau_{1}\right)$ & $0.6 \rightarrow 0.99$ \\
 \hline
 Baseline value, $\mu_{0}$ & (0.1, 0.8, 0.1) \\
 \hline
 Post-treatment center, $\mu_{1}$ & (0.01, 0.54, 0.36) \\
 \hline
 Measurement precision, $\tau$ & 11.54 \\
 \hline
 Voxel size mean, $\mu_{v}$ & 6.37 \\
 \hline
 Voxel size std-dev, $\sigma_{v}$ & 1.38 \\
 \hline
\end{tabular}
\label{tab:habitat_sims}
\end{table}

\subsection{Patient cohort and image acquisition}\label{subsec:patient_cohort}

All WBDWI data were acquired at a single institution as part of three independent imaging studies.
These studies were approved by the local research and ethics committee, and all patients were recruited and consented
at a single institution.
Inclusion criteria were: histopathological diagnosis of prostate cancer, history of bone metastases and no
contraindication for MRI acquisition.

\textbf{Study 1} investigated the use of ADC derived from WBDWI to assess the response of Radium-223 as a therapeutic
agent in 35 patients diagnosed with \acrfull{mcrpc}.
Images were acquired before and after treatment with a median time between scans of 28 days (range: 21--35 days) using
a protocol that consisted of either two b-values (b = 50 and 800s/mm$^{2}$ in two patients or 50 and 900s/mm$^{2}$ in
one other), or three b-values (b = 50, 600, and 900s/mm$^{2}$ in all other patients).
All images were acquired on a 1.5T system, using a 2D twice-refocused spin-echo \acrfull{epi} sequence, with
trace-weighted diffusion encoding and inversion-recovery fat suppression.
Images were acquired over 3--5 imaging stations, comprising 40 slices/station in the axial plane over 128 columns
(left--right/readout direction) and 96--116 rows (anterior--superior/phase--encode direction).
Scanner-based interpolation was used to double the isotropic in--plane image resolution to 1.7mm$^{2}$ and a slice
thickness of 5mm.

\textbf{Study 2} investigated the use of ADC derived from WBDWI to assess the response of a chemotherapeutic agent in
14 patients diagnosed with \acrshort{mcrpc}.
Images were acquired before and after treatment with a median time between scans of 84 days (range: 42--173 days) using
a protocol that consisted of three b-values (b = 50, 600, and 900s/mm$^{2}$).
All images were acquired on a 1.5T system, using a 2D twice-refocused spin-echo \acrfull{epi} sequence, with
trace-weighted diffusion encoding and inversion-recovery fat suppression.
Images were acquired over 3--5 imaging stations, comprising 40 or 50 slices/station in the axial plane over 128 or 150
columns (left--right/readout direction) and 124 or 144 rows (anterior--superior/phase--encode direction).
Scanner-based interpolation was used to double the isotropic in--plane image resolution to 1.3, 1.4, or 1.5mm$^{2}$ and
a slice thickness of 5mm.

\textbf{Study 3} was a double baseline study comprising 10 patients diagnosed with \acrshort{mcrpc}, scanned prior to
treatment.
All patients were scanned twice on the same day, with repositioning between the examinations.
Diffusion-weighted imaging consisted of three b-values (b = 50, 600, and 900s/mm$^{2}$), acquired on a 1.5T system
using a 2D twice-refocused spin-echo \acrfull{epi} sequence, using trace-weighted diffusion encoding and
inversion--recovery fat suppression.
Images were acquired over 3 or 4 imaging stations, comprising 42 slices/station in the axial plane over 128
columns (left--right/readout direction) and 104 or 128 rows (anterior--superior/phase--encode direction).
Scanner-based interpolation was used to double the isotropic in--plane image resolution to 1.7mm$^{2}$ and a slice
thickness of 6mm.

\subsection{Image and data processing}\label{subsec:image_processing}
Using the OsiriX Dicom viewing platform~\cite{Rosset:2004wy}, a radiologist with over five years of experience in body
MRI for cancer, manually delineated up to six target lesions in each patient on the low b-value images, using available
anatomical imaging (e.g.\ Dixon or T2-weighted) to assist in target definition.
They ensured that the same lesion was visible at both time-points for each study, so that they may be monitored over
time.
This resulted in a total of 120, 68, and 73 lesions delineated for Study 1, Study 2, and Study 3 respectively.
The median lesion volumes were (at baseline) 12.5ml (\acrfull{iqr}: 4.2--31.7ml), 22.0ml (9.0--35.9ml), and 5.1ml
(2.0--9.2ml) for studies 1, 2, and 3 respectively.
A dedicated script was developed using the pyOsiriX plugin~\cite{blackledge2016rapid} that calculated maps of
\acrshort{adc} using maximum--likelihood estimation and incorporating all available b-values for each study.

For each lesion, median ADC values within each lesion were recorded.
Furthermore, habitat maps were calculated according the methods described in section~\ref{subsec:habitat_modelling},
using baseline ADC percentile thresholds of 10\% and 90\%.
We subsequently used the Stan programming language to fit our proposed mixture models for each marker as defined in
section~\ref{subsec:habitat_modelling} for \textbf{Study 1} and \textbf{Study 2} independently (code provided in
Supplementary Material~\ref{subsec:median_adc_stan_code} and~\ref{subsec:habitat_stan_code}).
In both cases data from \textbf{Study 3} were used as the double baseline images.
Posterior predictive samples were obtained so that we could investigate our proposed numerical approach to anomaly
detection.
The posterior density of these samples was approximated using a 5-component Dirichlet mixture model using an in-house
expectation maximization routine.
The position with maximum posterior density was subsequently found using a Nelder-Mead optimization routine, with a
coarse grid-search used for initialization.
This allowed us to determine the credible region, using the method outlined in
section~\ref{subsubsec:anomaly_detection_numerical}.
Furthermore, model label samples were generated in order to test our approach to posterior odds calculation
(section~\ref{subsec:mixture_modelling}]).
For each model fit, we used 3 independent \acrshort{mcmc} chains, each consisting of 5000 samples (following a burn-in
period of 500 samples).
We did not perform any sample thinning, leading to a total of 15000 samples per parameter.
Sample convergence was determined using the Gelman-Rubin statistic, $\hat{R}$,~\cite{gelman1992inference}
using the ArviZ package~\cite{kumar2019arviz}.

\section{Results}\label{sec:results}
\subsection{Simulation Studies}

Figure~\ref{fig:adc_median_simualtions} presents the simulation results for our median ADC model.
Overall, the parameter estimates appear to be unbiased, with credible intervals derived from posterior samples
demonstrating good coverage for all sampled parameter values.
However, a notable exception occurs in scenarios where $|\mu_{\Delta}|\rightarrow 0$ and $\eta\rightarrow0.5$.
In these cases, the model struggles to distinguish between post-treatment change and no change, as the models for change
and no-change become increasingly similar.
This challenge is also reflected in the accuracy diagnostics, which are particularly poor in this parameter region.
Notably, in certain parameter ranges, the posterior odds-based model comparison achieves 100\% sensitivity and 100\%
specificity.
This performance starkly contrasts with the use of p-values, which are theoretically limited to a maximum specificity of
95\% by design.

Figure~\ref{fig:adc_habitat_simualtions} presents the simulation results for our habitat model.
Overall, there appears to be minimal bias across the range of parameters investigated, with the posterior credible
intervals effectively capturing the true values.
In terms of accuracy diagnostics, the habitat model demonstrates improved area AUC and sensitivity compared to the
median ADC model.
However, this improvement comes at the cost of reduced specificity in certain parameter ranges.
A general decline in metric performance is observed as the proportion of regions exhibiting change decreases
($\lambda \rightarrow 0.4$), reflecting fewer samples representing change.
Additionally, a drop in performance is evident as the IMS decreases ($\eta \rightarrow 0.6$), further challenging the
model's ability to distinguish between change and no-change scenarios.

\begin{figure*}[!htpb]
   \centering
   \includegraphics[width = \linewidth]{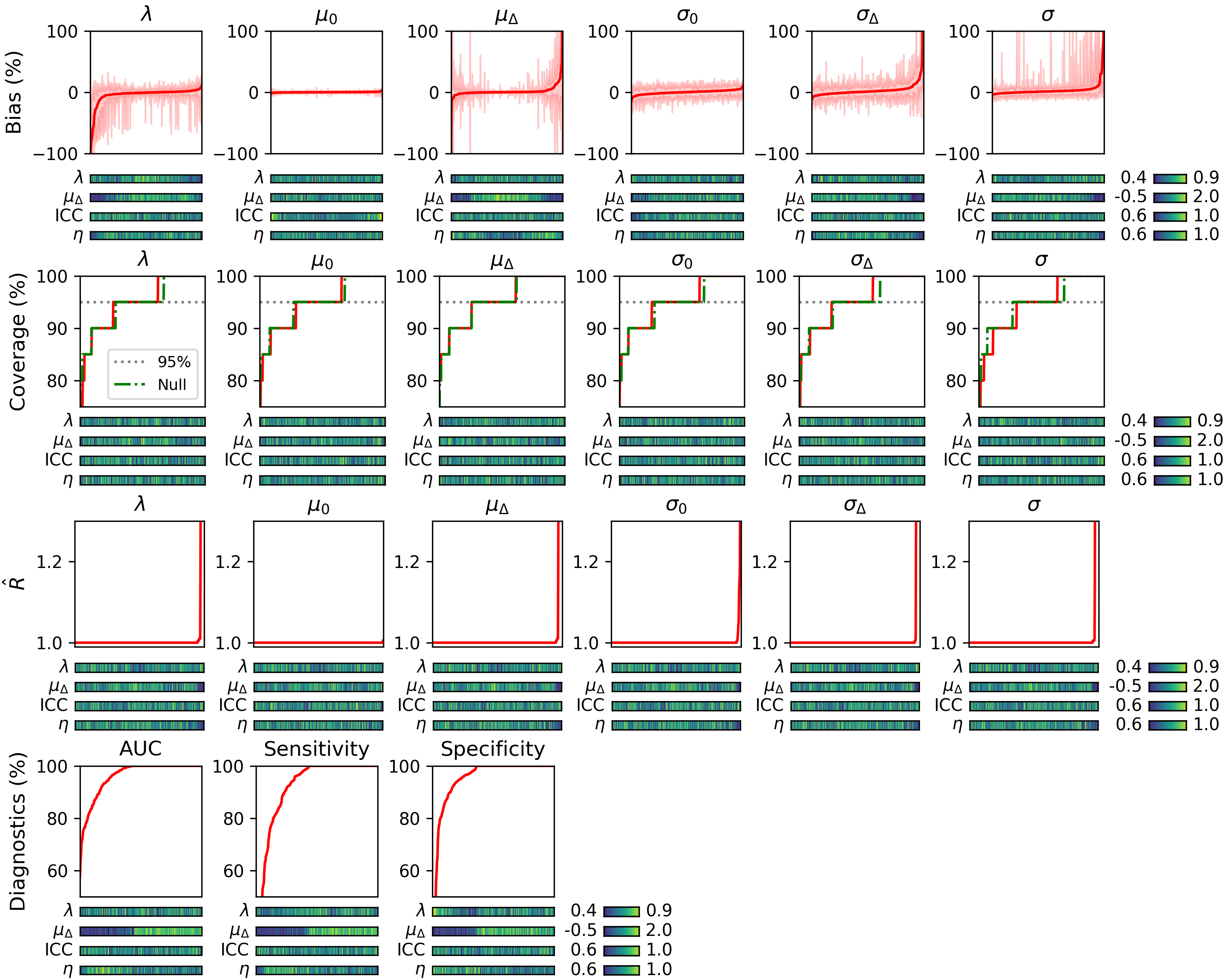}
   \caption{
   Simulation results for real-valued biomarkers (median ADC in this instance). The \textbf{top row} displays the
   percentage bias in all estimated model parameters. The solid red line represents the median estimates across 20 data
   repetitions for each combination of true parameters, while the pink region indicates the 10th to 90th percentile
   range. Values are ordered from smallest to largest bias, with potential explanatory parameters shown in the color bar
   below (also ordered by median bias). No significant bias is observed for any combination of explanatory parameters;
   however, low \(\mu_\Delta\) has the most notable influence. The \textbf{second row} demonstrates the coverage of the
   derived posterior distributions, defined as the proportion of cases where the true parameter falls within the 95\%
   high-density interval (HDI) of the posterior. For context, a null coverage curve is included, derived by randomly
   sampling a standard normal variate and calculating the frequency with which it falls within the empirical 95\%
   confidence interval, then ordering the results. The close agreement between the empirical and null coverage curves
   confirms that the estimator achieves the expected nominal coverage. The \textbf{third row} illustrates the median
   Rubin-Gelman statistic (\(\hat{R}\)) across all simulations for each parameter combination. Excellent convergence
   (\(\hat{R} \to 1\)) is observed for most parameter combinations, except in a small region where \(\mu_\Delta \to 0\)
   and \(\eta \to 0.5\). In this region, model assumptions are violated due to minimal differences between baseline and
   post-treatment data. The \textbf{bottom row} depicts diagnostic metrics of log-\acrshort{po} for the simulation
   studies. Performance improves with increasing \(\mu_\Delta\) and/or \(\eta\), as larger expected differences between
   changing and non-changing biomarkers enhance model accuracy.}
   \label{fig:adc_median_simualtions}
\end{figure*}

\begin{figure}[!htpb]
   \centering
   \includegraphics[width = 0.6\linewidth]{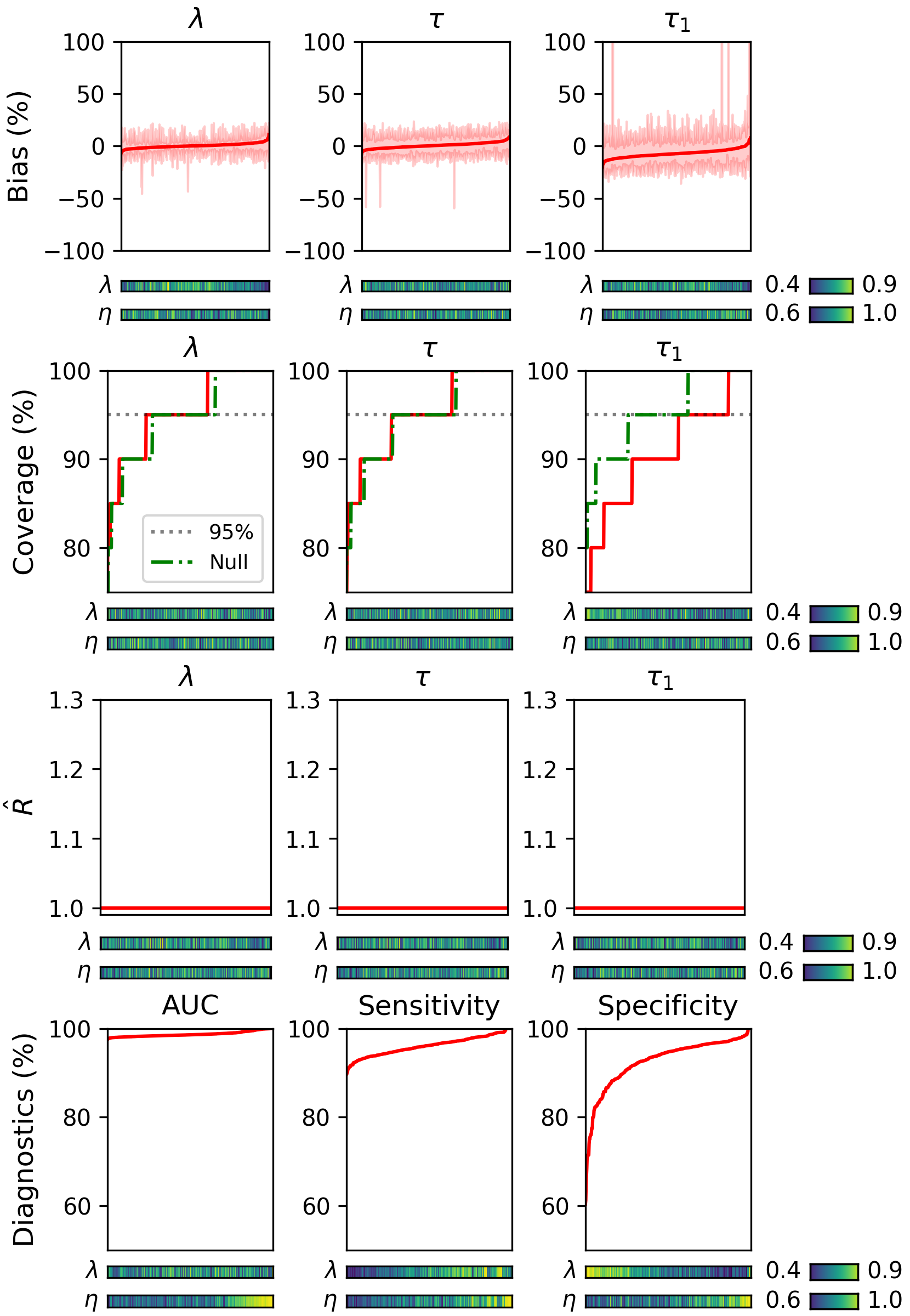}
   \caption{
   Simulation results for dirichlet-multinomial (habitat) biomarkers (median ADC in this instance). All results are
   displayed in the same format as for real varaibles (Figure~\ref{fig:adc_median_simualtions}]). Again, we observe no
   significant evidence of substantial bias in estimators (top row), and observe good convergence for all parameters of
   interest (third row). Good coverage is obserbed for parameters $\tau$ and $\lambda$, but mildly poorer coverage than
   the null curve for $\tau_{1}$ (second row). Finally, there is an improvement in diagnostic performance (bottow
   row) as $\eta$ increases, whilst an increase in $\lambda$ leads to improved sensitivity but poorer specificity.}
   \label{fig:adc_habitat_simualtions}
\end{figure}

\subsection{Patient Studies}\label{sec:results_patients}
In figure~\ref{fig:pat_examples} we present 5 patient example cases, two from study 1 and three from study 2.
In each case we present summaries of each of the target lesions, including their derived habitat map (central slice
only), and their location on the high b-value \acrshort{mip}.
We color-code the lesion on the \acrshort{mip} according to the value of the estimated posterior odds, which is binned
into one of five possible categories: $PO_{10} \leq \frac{1}{3}$ (maroon) indicating moderate evidence of no change,
$\frac{1}{3} < PO_{10} \leq 3$ (orange) indicating insufficient evidence to say whether a lesion demonstrates
significant change,$3 < PO_{10} \leq 10$ (yellow) indicating moderate evidence of significant change,
$10 < PO_{10} \leq 30$ (teal) and $BF_{10} > 10$ (dark green) indicating strong and very strong evidence of change
respectively.
These colors are all used to plot the changes in habitats in barycentric coordinates, along with the credible region
shown as a grey area (for novelty detection).
One observation is that posterior odds appear to confer with the idea of novelty detection; those lesions that
demonstrate significant change after treatment tend to move far beyond the credible region, whilst those with low
posterior odds ($<1/3$) remain within the credible region.

In these examples it is clear that inter-patient and intra-patient/inter-lesion response heterogeneity is evident within
these cohorts.
For those lesions that do change significantly we highlight that there is also divergence in the direction of change;
whilst the vast majority of changing lesions tend to demonstrate an increase in the proportion of voxels with higher
ADC than baseline (figure~\ref{fig:pat_example_1013} for example), there are some that show a general decrease in
ADC (figure~\ref{fig:pat_example_1010}), and also a patient that demonstrates clear intra-tumour heterogeneity in
ADC changes (figure~\ref{fig:pat_example_002}).
\begin{figure}[!htpb]
    \centering
    \begin{subfigure}[t]{0.48\linewidth}
        \includegraphics[width=\linewidth]{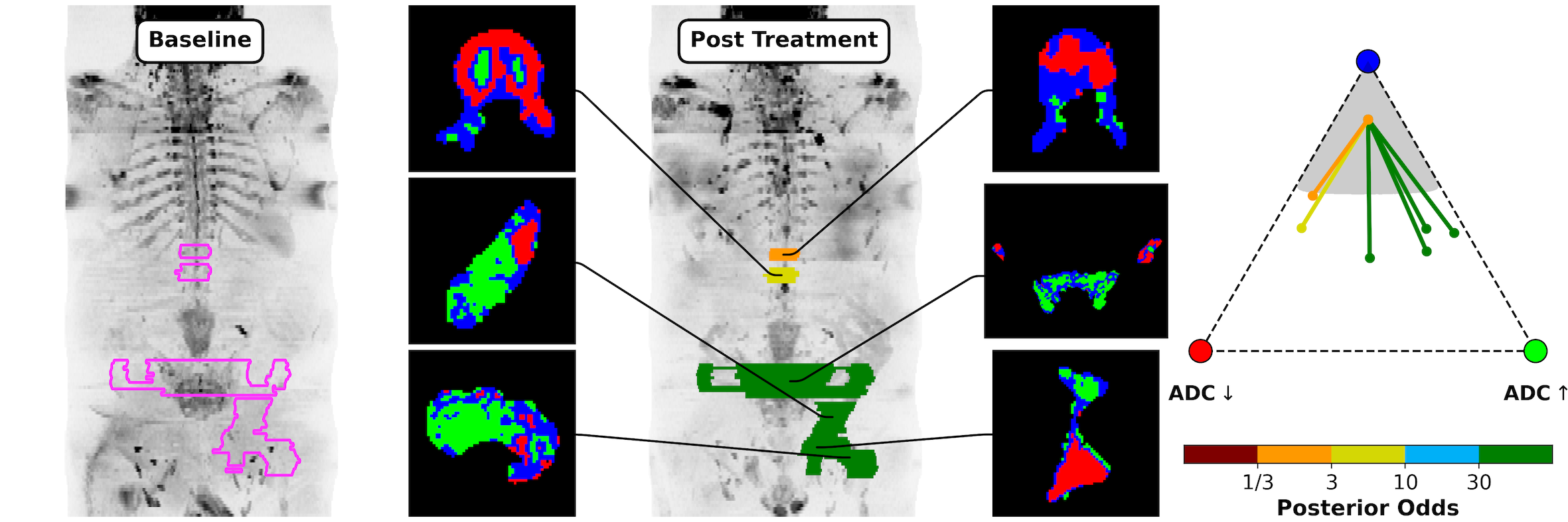}
        \caption{In this patient (study 1) we see wide varierty of responses across different lesions.
            Regions within the vertebral bodies demonstrate no obvious change following treatment
            ($\frac{1}{3} < BF_{10} \leq 3$), whilst those in the pevis do demonstrate significant change with a genera
            increase in ADC.
            One particularly interesting lesion is in the acetabulum which demonstrates intra-tumoral heterogenous
            change, clearly depcted in the bottom--right habitat map.
            demonstrates }
        \label{fig:pat_example_002}
    \end{subfigure}
    \hfill
    \begin{subfigure}[t]{0.48\linewidth}
        \centering
        \includegraphics[width=\linewidth]{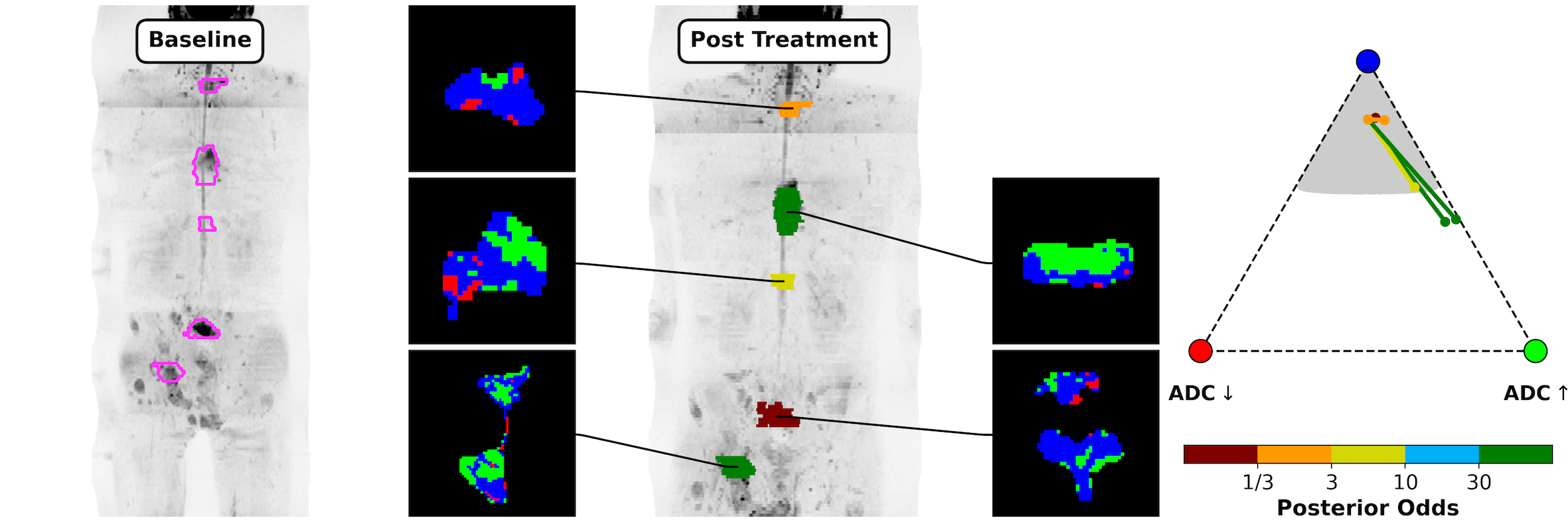}
        \caption{This patient (Study 1) presented with inter-tumoral heterogneity of response, with lesions in the
            sacrum (bottom--right map) and T1 vertebra (top--right) presenting with very little change,
            whilst lesions in the sternum (right--middle) and acetabulum (bottom--left) demonstrate a significant shift
            towards increasing ADC.}
        \label{fig:pat_example_0105}
    \end{subfigure}
    \vspace{1em}
    \begin{subfigure}[t]{0.48\linewidth}
        \centering
        \includegraphics[width=\linewidth]{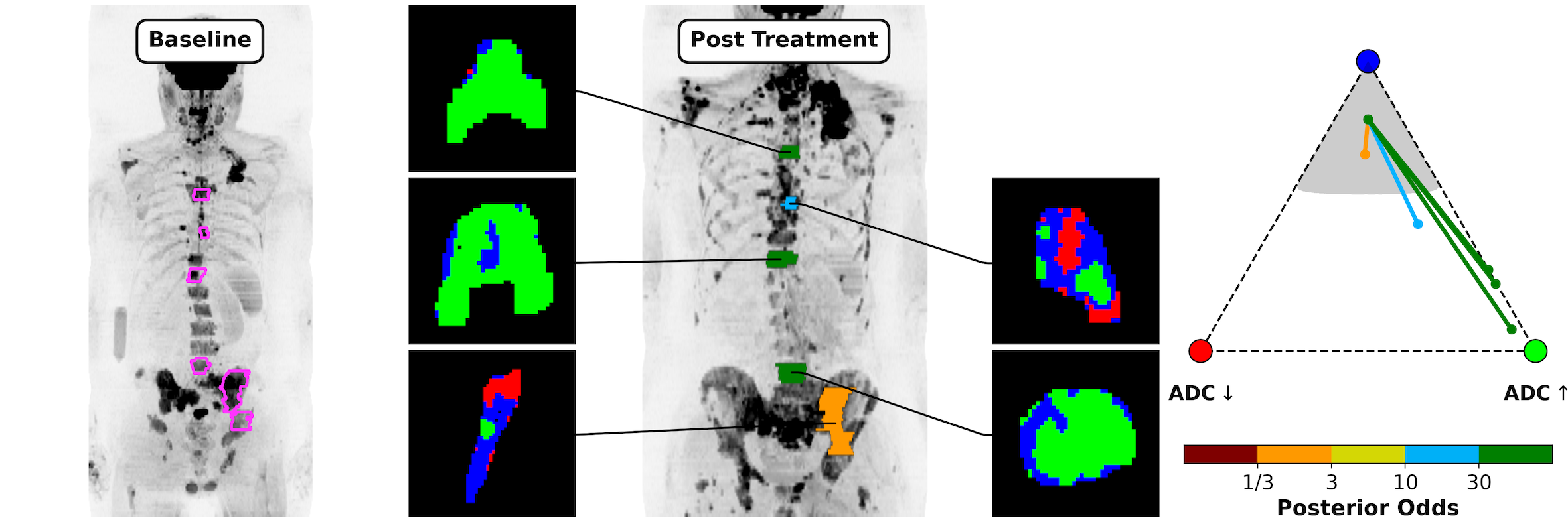}
        \caption{This patient (Study 2) demonstrated a general trend of increasing ADC across all observed lesions,
            indicating a positive respoonse to treatment. However, two lesions demonstrated less evidence of response,
            potenitally indicating a future resistance to treatment.}
        \label{fig:pat_example_1007}
    \end{subfigure}
    \hfill
    \begin{subfigure}[t]{0.48\linewidth}
        \centering
        \includegraphics[width=\linewidth]{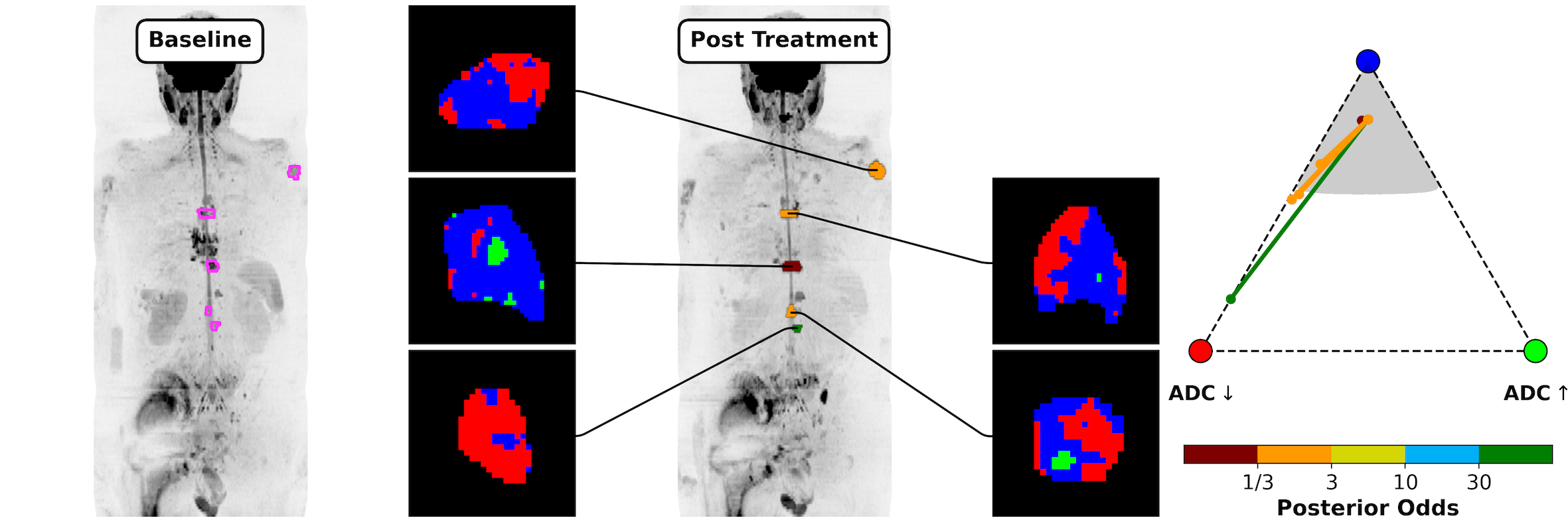}
        \caption{In this patient (Study 2), most tumours appear to reduce in ADC.
            However, only one of these is considered to be significant, perhaps due to the fact that lesions are
            small and thus measurement precision might be poor (as captured by the posterior odds).
            Retrospective evaluation of this patient by an expert radiologist confirmed that this drop in ADC
            was genuine, and not due to a return of yellow marrow following treatment.}
        \label{fig:pat_example_1010}
    \end{subfigure}
    \vspace{1em}
    \begin{subfigure}[t]{0.48\linewidth}
        \centering
        \includegraphics[width=\linewidth]{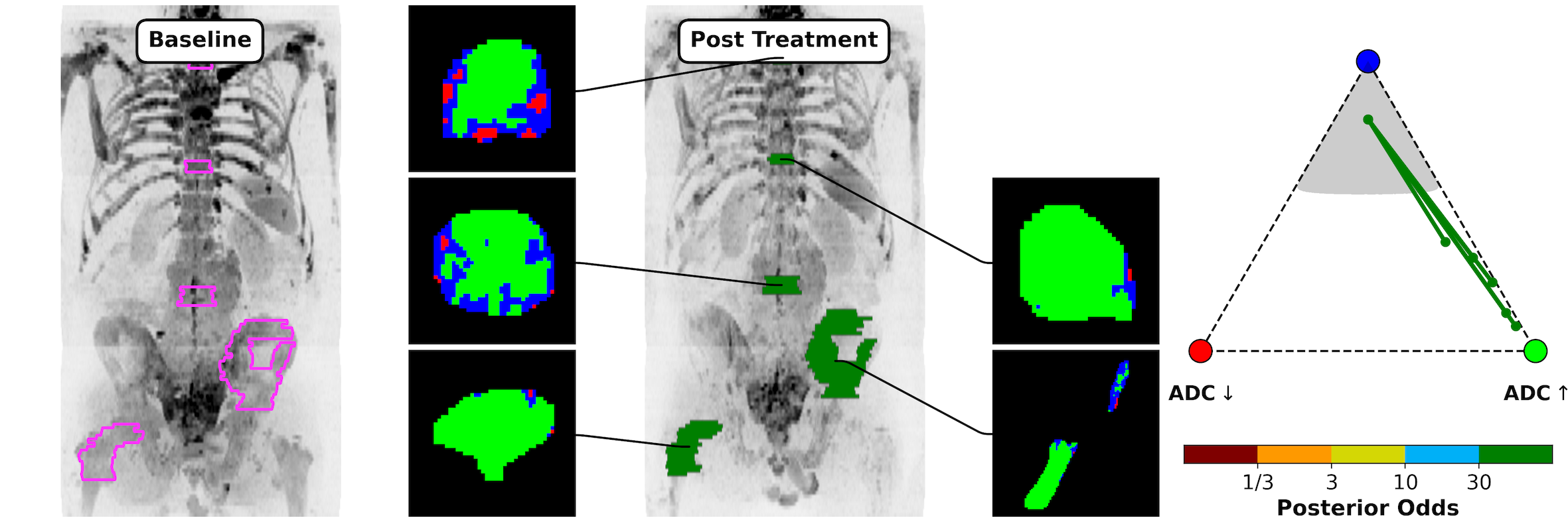}
        \caption{In this case we see a patient (Study 2) for whom all lesions changed significantly, with a general
            trend of increasing ADC.}
        \label{fig:pat_example_1013}
    \end{subfigure}

    \caption{Example results for five patient evaluated in this study. We have presented a \acrfull{mip} with lesions
        colour--coded according to the derived posterior odds.
        Habitat maps (central slice) are displayed as an overlay on the ADC map for each lesion, and a summary of
        inter-lesion changes can be observed in the triangular plots depciting the noramlised habitat simplexes in
        barycentric coordinates.
        All imaging data are presented with identical windowing settings.}
    \label{fig:pat_examples}
\end{figure}

These observations are further verified in figure~\ref{fig:study_summary} which demonstrates the changes occurring in
all lesions across both studies.
Whilst the majority of lesions demonstrate a significant increase in the proportion of voxels with high ADC value,
there are many that do not demonstrate any change (maroon colored lines) and several that significantly change in the
other direction (signalling an increase in the number of voxels with lower ADC).
Moreover, in Figure~\ref{fig:study_mixtures}, we present the posterior distributions for mixing weights ($\lambda$) for
each study.
In both cases, there is evidence that $\lambda$ is significantly below 1, illustrating presence of response
heterogeneity amongst individual lesions.
This is true for both median ADC and habitat biomarkers.

To compare the posterior odds derived from both median ADC and habitat biomarkers, a scatter plot of log-\acrshort{po}
for both markers is presented in figure~\ref{fig:lpo_comparison}.
There is a significant positive correlation between both, indicating that they may convey similar information.
However, this association is not perfect, indicating that they may indeed demonstrate different information in certain
lesions.

\subsection{Prior Sensitivity Analysis}
To evaluate the sensitivity of our model fit to the choice of prior, we conducted a prior sensitivity analysis, with
results presented in Figure\ref{fig:sensitivity_analysis} for Study 1 and Supplementary
Figure~\ref{fig:sensitivity_analysis_2} for Study 2.
In this analysis, we varied the prior widths for the ADC median model (\(\sigma_\mu\) and \(\gamma_\sigma\),
Table~\ref{tab:real_assumptions}) and for the habitat model (\(\gamma_\tau\), Table~\ref{tab:simplex_assumptions}).
The posterior distributions of the following parameters were estimated both with and without the inclusion of
post-treatment data: \(\sigma\), \(\sigma_0\), and \(\mu_0\) for median ADC biomarkers, and \(\tau\) for habitat
biomarkers.

As expected, we observed that narrow priors (\(\sigma_\mu \leq 0.1 \times 10^{-3} \, \text{mm}^2/\text{s}\)) influence
the inference of the estimated distributions.
However, less informative priors had a smaller impact on the posterior estimates, indicating greater robustness of the
model to these choices.
In general, the inclusion of post-treatment data had minimal effect on the posterior distributions.
An exception to this trend was observed in the estimation of measurement precision (\(\tau\)) for habitat biomarkers in
Study 1.
This discrepancy may suggest some misalignment between the assumed model and the acquired post-treatment data.
In such cases, we recommend relying on estimates derived solely from repeat baseline data, as we have done when
calculating credible intervals for novelty detection.

\begin{figure}[!htpb]
    \centering
    \includegraphics[width=0.7\linewidth]{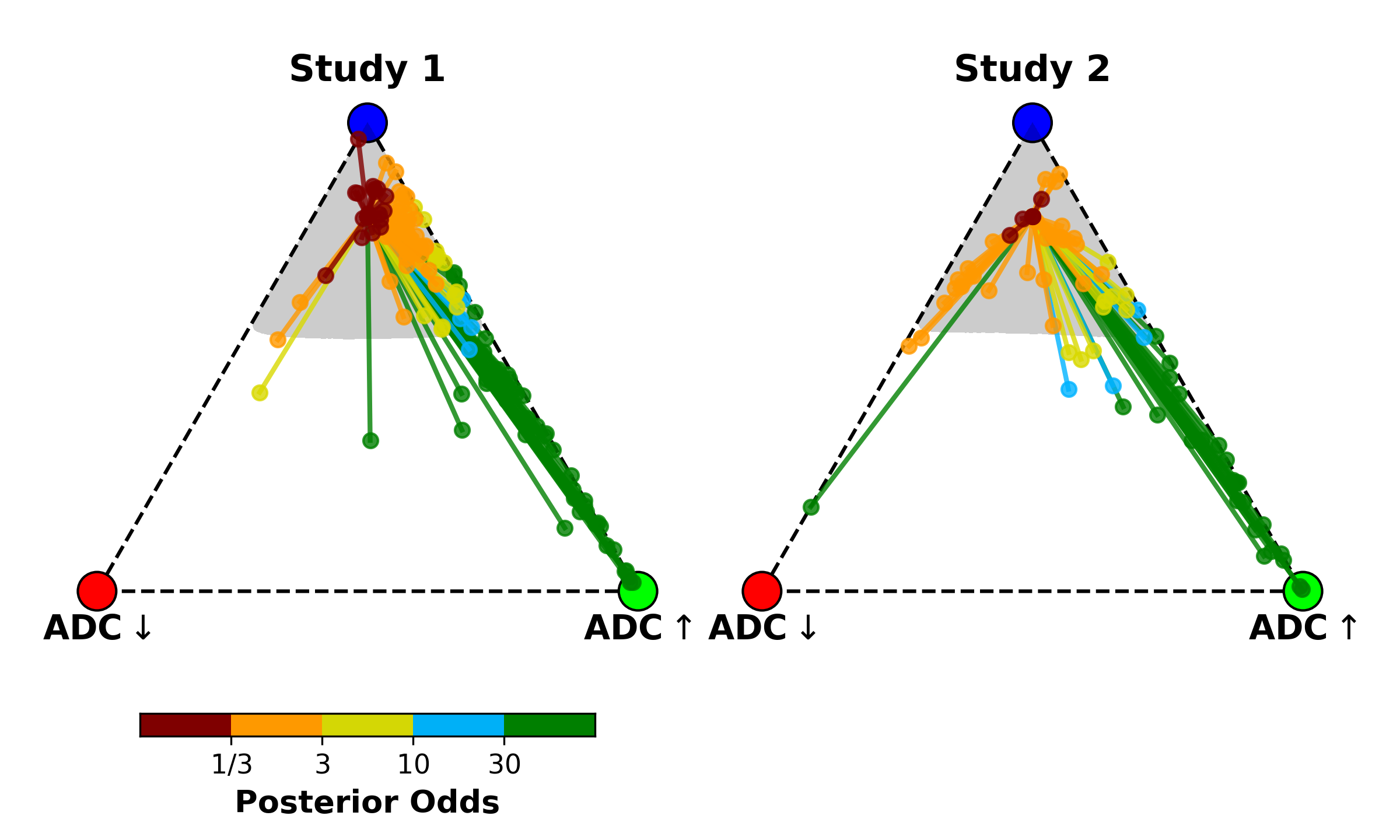}
    \caption{Barycenric coordinate plots of all lesions within both studies.
        For each lesion, its derived posterior odds is depcted as a color-coding, and the credible regions for
        identifying `anomalies' is shown as a grey shaded area.}
    \label{fig:study_summary}
\end{figure}

\begin{figure}[!htpb]
    \centering
    \includegraphics[width=0.7\linewidth]{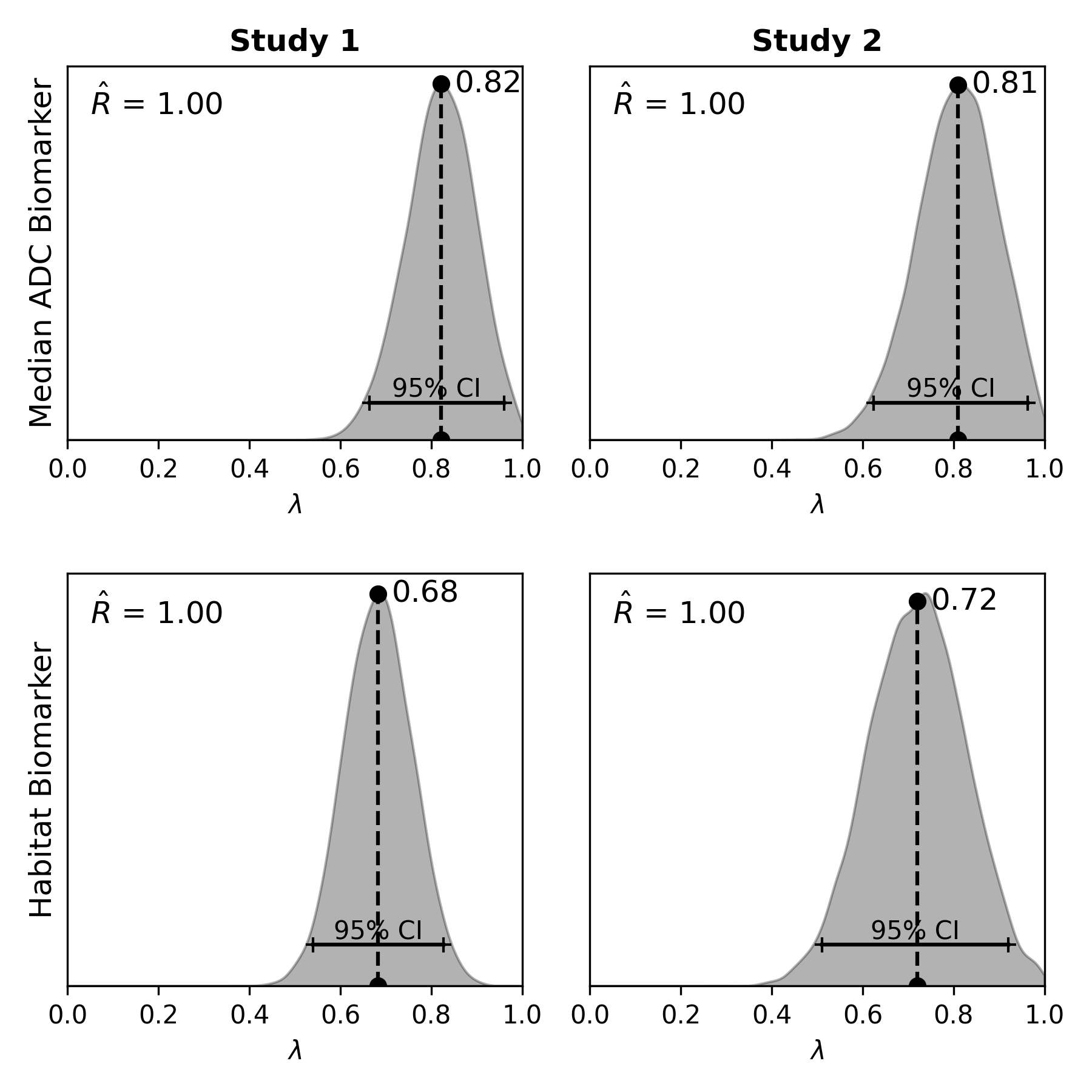}
    \caption{Kernel density plots of the posterior distributions for the mixing weight $\lambda$, shown alongside the
        median esimtate (vertical dashed line), and the Rubin-Gelman statsitic ($\hat{R}$).}
    \label{fig:study_mixtures}
\end{figure}

\begin{figure}[!htpb]
   \centering
   \includegraphics[width = 0.5\linewidth]{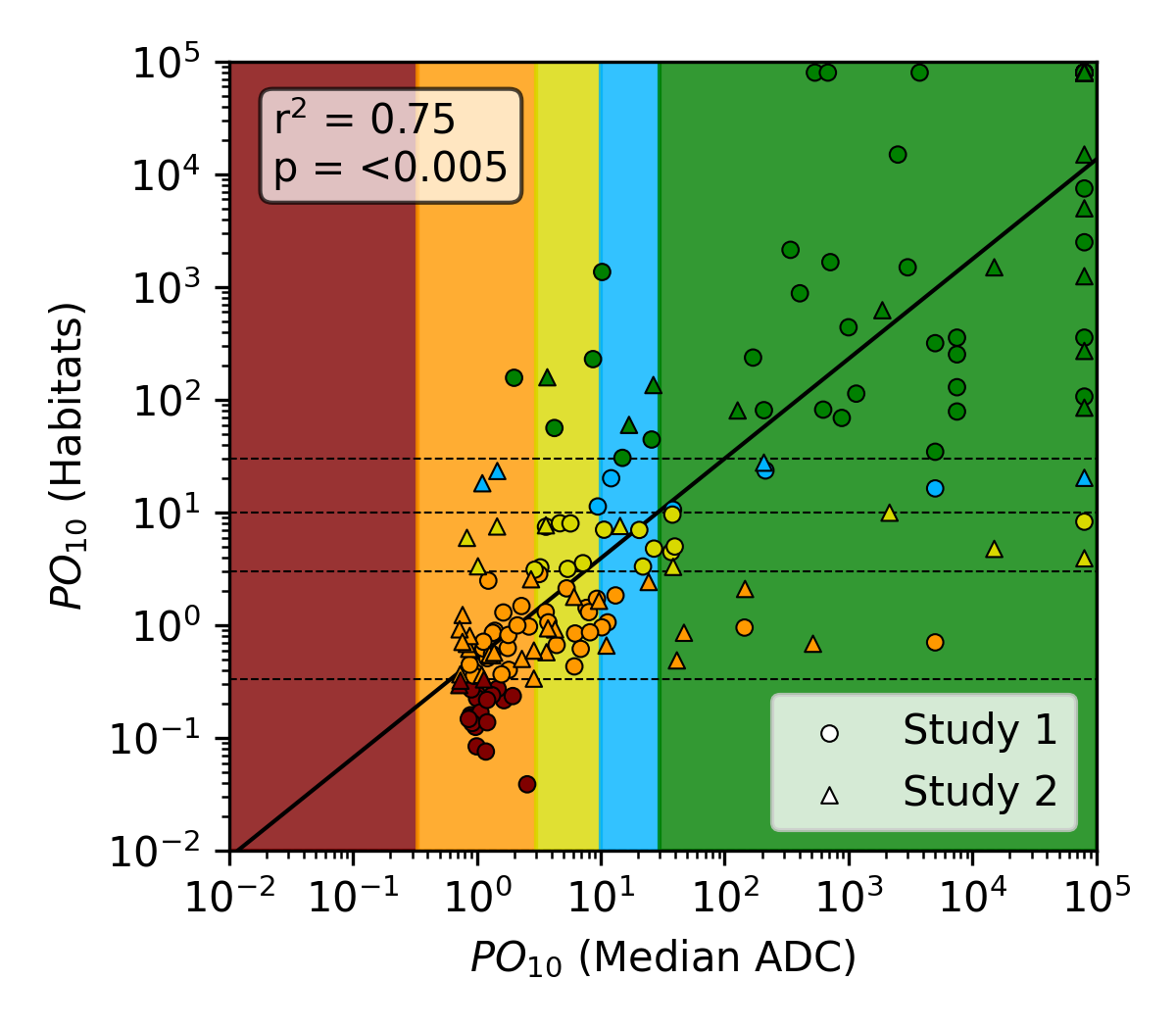}
   \caption{A comparison of log-\acrshort{po} derived for median ADC within each lesion and for habitats defined for
   each region. Note that our estimation method for \acrshort{po} can lead to inifite values. These were capped to
   80,000 and not included in estimation of $r^{2}$ or the p-value. Vertical colors represent the \acrshort{po}
   classifications defined in section~\ref{sec:results_patients} for the median ADC data, whilst the symbol colors
   represnt the \acrshort{po} for the habitat data. Data from studies 1 and 2 are seperated by symbol shape.}
   \label{fig:lpo_comparison}
\end{figure}

\begin{figure}[!htpb]
    \centering
    \includegraphics[width=0.75\linewidth]{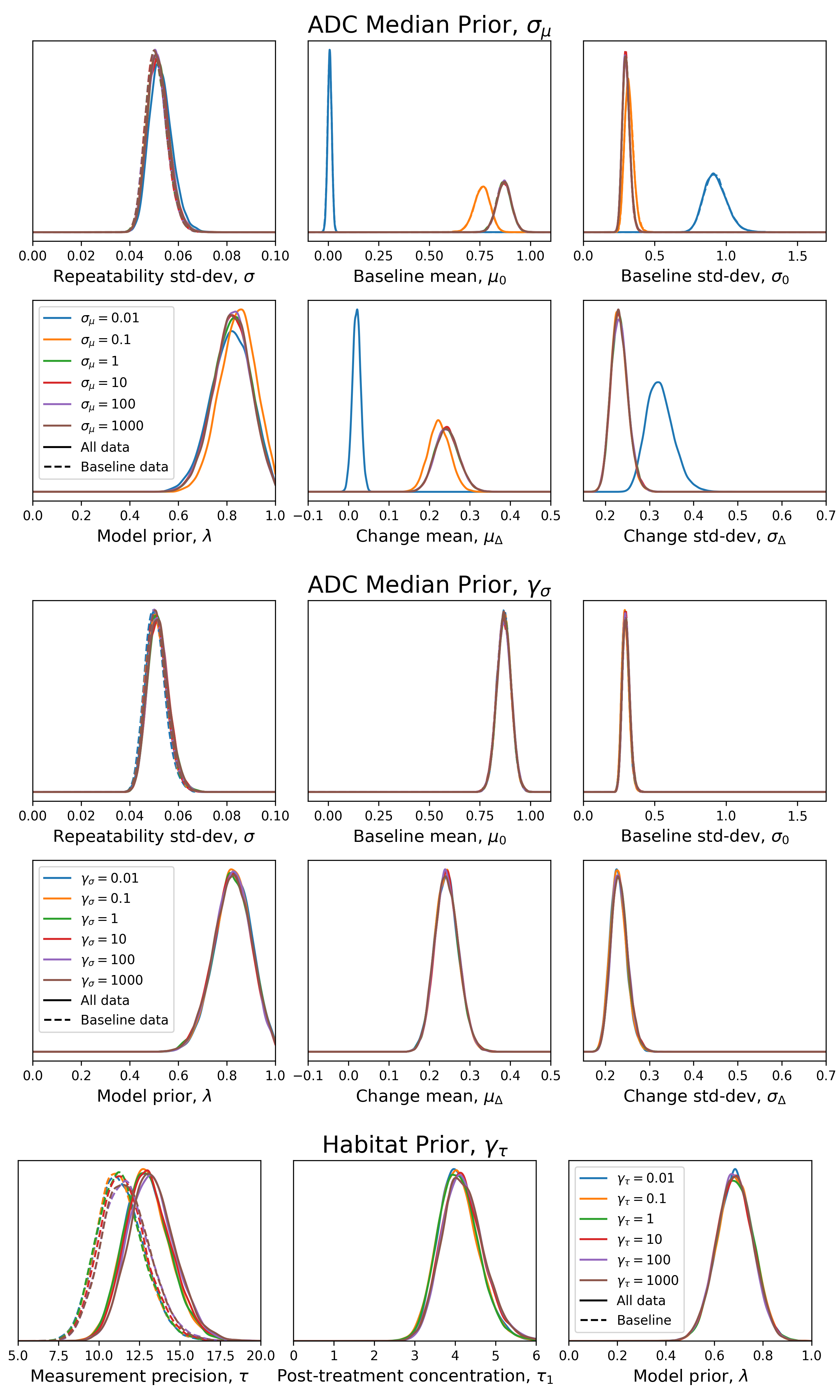}
    \caption{Prior sensitivity analysis for Study 1. The primary prior parameter influencing the posterior
    distributions was \(\sigma_\mu\), which defines the width of the zero-mean normal prior for \(\mu_0\) and
    \(\mu_\Delta\) in ADC median biomarkers. For habitat biomarkers (bottom row), the inclusion of post-treatment data
    in the model fit (solid curves) notably affected the estimation of measurement precision (\(\tau\)). For ADC median
    biomarkers, all parameters except \(\lambda\) are expressed in units of \(\times 10^{-3} \, \text{mm}^2/\text{s}\).}
    \label{fig:sensitivity_analysis}
\end{figure}

\section{Discussion}\label{sec:discussion}
In this article, we aim to develop a novel framework for detecting heterogeneous tumour responses in clinical studies.
The application of Bayesian theory to assessing biomarker precision in cancer imaging is somewhat underutilized,
particularly given recent advancements in \acrfull{mcmc} algorithms, including \acrfull{hmc}, and their efficient
and generalized implementations for posterior parameter sampling.
We propose that  the detection of cancer treatment response can be framed as a decision between two competing models in
a biomarker study: one where the underlying true biomarker has changed (the alternative model) and one where it has not
the null model).
The goal of Bayesian analysis then is to quantify this choice by considering the precision of the measurement
device and any other parameters that describe the population distribution of the biomarkers.
We explore two Bayesian approaches to quantify the choice between these models: anomaly detection and Bayes factors
(or their close equivalent the \acrfull{po}).

Through theoretical analysis, we have demonstrated that anomaly detection closely parallels conventional practices in
response biomarker assessment.
In both approaches, the primary objective is to test and potentially reject the null hypothesis of no change in a
biomarker for a patient, based on measurements taken before and after an intervention (e.g., treatment).
This assessment is conducted at a predefined significance level, leveraging dedicated test-retest baseline biomarker
data (ideally in a similar population) to derive the parameters of the null model (including measurement precision) and
the observed post-treatment change for the patient.
Integrating this method into a Bayesian framework, however, exposes the underlying assumptions of the model that can be
overlooked and may not be valid in practice.
For instance, we illustrate that conventional Bland-Altman analysis often assumes an infinite possible range for the
biomarker, which may not be realistic—particularly in the case of quantitative imaging biomarkers, where measurements
are typically bounded by physical or biological constraints.
Bayesian analysis, facilitated by numerical \acrshort{mcmc} sampling, allows us to relax these assumptions.
It provides a comprehensive characterization of the uncertainties inherent in all estimated model parameters and,
perhaps most importantly, it enables the assessment of biomarkers that are not univariate variables, a topic we explore
extensively in this article.

We have also explored theoretical interpretations of Bayes factors in response biomarker studies.
Notably, we demonstrated that Bayes factors can incorporate expected post-treatment changes into the decision-making
process by integrating parameters from the alternative model.
A key advantage of using Bayes factors is their ability to help researchers evaluate whether the data provide sufficient
evidence to support the null model, allowing for the conclusion that there has been no change following the
intervention.
However, caution is warranted.
Unlike the null model, estimating parameters for the alternative model can be challenging due to the difficulty of
obtaining controlled experimental data.
Identifying lesions or subjects that have responded to an intervention is often complex and may, in fact, be the primary
role of the imaging biomarker itself.
To address this, we propose that Bayesian mixture modeling offers a robust solution.
This approach simultaneously estimates the parameters of the alternative model and quantifies posterior odds
(equivalent to Bayes factors when the prior probability for each data model is the same) for each subject in the study.

To illustrate the efficacy of these methods, we applied them in studies evaluating the response to two systemic
treatments for \acrfull{mcrpc} using \acrfull{wbdwi}.
These studies demonstrated that posterior odds estimates detect significant changes in individual lesions.
Furthermore, color-coding lesions according to their posterior odds offers a practical visualization method,
highlighting the heterogeneity of response both among patients and between lesions within a single patient.
Use of Bayesian mixture modeling enabled the quantification of this response heterogeneity, revealing a
consistent response rate of approximately 70\% across both studies.
Our framework incorporated habitat mapping from \acrshort{adc} maps directly into the decision-making process, providing
detailed visualizations of changing regions within the lesions.

It is crucial to emphasize that Bayesian techniques rely on the careful definition of the underlying models.
These models should be developed collaboratively with domain experts and those with expertise in probabilistic modeling
to ensure their validity and applicability.
Dedicated simulation studies, such as those conducted in this work, are invaluable for demonstrating that effective
parameter inference is achievable for well-defined models.
In addition, prior sensitivity analysis is a critical component of any Bayesian study.
This step ensures that the chosen prior distribution—unless specifically and carefully selected—does not unduly
influence or bias the final results, as highlighted by Depaoli et al.~\cite{depaoli2020importance}.
Moreover, it is crucial that the test-retest data used to determine parameters in the response model are representative
of the data to which the model will be applied.
The size of the test-retest cohorts plays a significant role in influencing the accuracy of biomarker precision
estimates, which, in turn, directly impacts the reliability and validity of model comparisons.

A slight limitation of this study is that we have treated spatially isolated lesions as independent.
This approach was necessary due to the relatively small number of lesions delineated per patient in our dataset.
One of the significant advantages of Bayesian methods in biomarker studies is their natural ability to incorporate
inter- and intra-patient correlations in biomarker changes through hierarchical modelling~\cite{donners2024inter}.
Looking ahead, for cases where a large number of lesions are delineated within a single patient, it would be feasible to
model a per-patient response heterogeneity index, $\lambda_{i}$ for $i \in {1, \dots, N}$ patients.
This could include a complete characterization of the posterior distribution for these parameters, offering a more
comprehensive understanding of intra-patient response variability.
Applying these techniques in the clinic will eventually require automatic delineation of multiple \acrfull{rois}
across various time points in whole-body imaging.
This task, necessitating an experienced radiologist, can be labor-intensive.
There are promising areas of research, such as the use of \acrfull{ai} to automate this process, potentially reducing
costs and facilitating clinical adoption~\cite{candito2024deep, ceranka2020computer}.

From a technical perspective, we acknowledge potential improvements in our approach to calculating Bayes factors.
Techniques like bridge sampling~\cite{gronau_bridge} have shown greater stability in computing Bayes factors compared to
other methods.
However, current implementations primarily support complete model comparisons and do not extend to estimating individual
subjects within a study via mixture models, as we have explored.
Future work could investigate adapting bridge sampling for this purpose.
Moreover, our study does not include biological validation of the techniques, as our primary aim was to establish a
theoretical framework for assessing the repeatability of tumor response heterogeneity measurements.
Future research might validate these methods through targeted biopsy or surgical specimen
analysis~\cite{Donners:2022vn}, linking our results with promising directions in digital
histopathology~\cite{Zhu:2023ws} and spatial gene expression profiling~\cite{Brady:2021tb}.
Ultimately, assessing the clinical utility of our approach will require larger patient studies with access to relevant
clinical end-points such as progression-free survival or overall survival~\cite{Perez-Lopez:2017wk,perez_volume}.

It is important to recognize the breadth of machine learning tools available for novelty detection that could be
explored for tumor response analysis.
These tools range from isolation forests~\cite{liu2008isolation} and one-class support vector
machines~\cite{Scholkopf:2001tr} to deep learning approaches like autoencoders~\cite{pidhorskyi2018generative}
and Siamese networks~\cite{Liu:2022wh}.
A significant advantage of these methods is their ability to operate without predefined data models.
However, they generally require access to large datasets, which may not always be available, particularly for
less common cancer types or specific imaging modalities.
Additionally, unlike Bayesian methods, these approaches generally do not inherently account for uncertainty in their
decision-making processes, an essential feature for any clinical decision-support tool.
An additional research area of interest here is the application of more informative prior distributions to model
parameters.
While we have opted for uninformative priors to avoid biasing our results, Bayesian techniques inherently
facilitate adaptive learning.
More informative priors could effectively integrate data from previous similar studies into the decision-making process,
enhancing the accuracy and relevance of the models.
Finally, we note that although have focussed on the application of techniques to a single post-treatment scans, these
methods can be adapted to biomarker studies with multiple post-treatment imaging studies for longitudinal assessment.
However, this will depend on the ability to integrate these methods into a medical device that helps
clinicians understand and visualize the complex changes occurring over time.

Bayesian methods in response biomarker studies present significant advantages over traditional statistical
approaches.
One of their most crucial benefits is the ability to analyze imaging features beyond normally-distributed biomarkers, a
topic we have extensively explored in this article.
This is particularly evident in habitat imaging which is emerging as a valuable technique for performing
``virtual biopsies'', and links the intra-tumoral heterogeneity captured from radiological imaging to the underlying
pathological and biological mechanisms.
Accurate characterization of habitats can be used to guide radiotherapy treatment to target the most aggressive tumour
areas, guide surgical biopsies and monitor treatment response~\cite{paverd2024radiology}.
We anticipate that frameworks such as ours are essential to assess the validity of such approaches and support their
integration in the clinical setting by providing clinicians with a clear and interpretable summary of which tumors in
each patient are showing significant changes post-treatment.
This could offer an early opportunity to modify or intensify treatments in areas showing resistance, potentially
improving patient outcomes and further advancing personalized medicine.

\section{Acknowledgements}\label{sec:acknowledgements}
This study represents independent research supported by the National Institute for Health and Care Research (NIHR)
Biomedical Research Centre and the Clinical Research Facility in Imaging at The Royal Marsden NHS Foundation Trust and
the Institute of Cancer Research, London.
The views expressed are those of the author(s) and not necessarily those of the NIHR or the Department of Health and
Social Care.

\section{Declaration of generative AI and AI-assisted technologies in the writing process}
During the preparation of this work the authors used ChatGPT to improve grammar within the paper only.
After using this service, the authors reviewed and edited the content as needed and take full responsibility
for the content of the publication.

\bibliographystyle{unsrtnat}
\bibliography{repeatability_paper}

\newpage
\section*{Supplementary Material}
\subsection{Mathematical derivations}\label{subsec:theory_derivation}
\begin{theorem}\label{theorem:conditional_posterior}
If
\[
y \sim \mathcal{N}(x_{0}, \sigma) \qquad y_{0} \sim \mathcal{N}(x_{0}, \sigma) \qquad x_{0} \sim \mathcal{N}(\mu_{0}, \sigma_{0})
\]
then 
\begin{equation}\label{eq:y_cond}
p(y | y_{0}, \sigma, \mu_{0}, \sigma_{0}, M_{0}) = \mathcal{N}\left(\mu^{*}, \sigma^{*} \right)
\end{equation}
where 
\begin{gather*}
\mu^{*} = ICC\cdot y_{0} + (1-ICC)\cdot \mu_{0} \quad \sigma^{*} = \sqrt{1+ICC}\cdot \sigma \\
ICC = \frac{\sigma_{0}^{2}}{\sigma^{2} + \sigma_{0}^{2}}
\end{gather*}
\end{theorem}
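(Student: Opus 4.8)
The plan is to exploit Gaussian conjugacy in two stages: first update the prior on $x_{0}$ using the single baseline observation $y_{0}$, and then propagate the resulting posterior uncertainty through the sampling distribution of the new measurement $y$. The only structural fact I need beyond normal algebra is that $y$ and $y_{0}$ are conditionally independent given $x_{0}$ (this is exactly the dependence structure of model $M_{0}$ in Figure~\ref{fig:models}, where both $y_{i0}$ and $y_{i1}$ depend on $x_{i0}$ alone), which is what licenses writing the posterior predictive as a single integral over $x_{0}$.

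\emph{Step 1 (posterior on $x_{0}$).} Since $y_{0}\mid x_{0}\sim\mathcal{N}(x_{0},\sigma)$ and $x_{0}\sim\mathcal{N}(\mu_{0},\sigma_{0})$, the normal--normal conjugate update gives $x_{0}\mid y_{0}\sim\mathcal{N}(m,s)$ with posterior precision equal to the sum of precisions, $s^{-2}=\sigma^{-2}+\sigma_{0}^{-2}$, and posterior mean the precision-weighted average $m=s^{2}\bigl(y_{0}/\sigma^{2}+\mu_{0}/\sigma_{0}^{2}\bigr)$. Simplifying, $s^{2}=\sigma^{2}\sigma_{0}^{2}/(\sigma^{2}+\sigma_{0}^{2})=ICC\cdot\sigma^{2}$ and $m=ICC\cdot y_{0}+(1-ICC)\mu_{0}$, using $ICC=\sigma_{0}^{2}/(\sigma^{2}+\sigma_{0}^{2})$. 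I would either cite this as the standard result or re-derive it in one line by completing the square in $x_{0}$ inside $\exp\{-(y_{0}-x_{0})^{2}/2\sigma^{2}-(x_{0}-\mu_{0})^{2}/2\sigma_{0}^{2}\}$.

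\emph{Step 2 (marginalize $x_{0}$).} By conditional independence, $p(y\mid y_{0},\sigma,\mu_{0},\sigma_{0},M_{0})=\int p(y\mid x_{0},\sigma)\,p(x_{0}\mid y_{0},\mu_{0},\sigma_{0})\,\mathrm{d}x_{0}$, which is the convolution of $\mathcal{N}(\cdot;x_{0},\sigma)$ with $\mathcal{N}(x_{0};m,s)$. A convolution of Gaussians is Gaussian with mean $m$ and variance $s^{2}+\sigma^{2}=ICC\cdot\sigma^{2}+\sigma^{2}=(1+ICC)\sigma^{2}$, giving $\mu^{*}=m=ICC\cdot y_{0}+(1-ICC)\mu_{0}$ and $\sigma^{*}=\sqrt{1+ICC}\,\sigma$. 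An equivalent route I might present instead is to note that $(x_{0},y_{0},y)$ is jointly Gaussian and read off $y\mid y_{0}$ directly from the standard conditional-mean/conditional-variance formulas.

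There is no substantive obstacle here: the statement is elementary Gaussian algebra. The only points requiring care are the bookkeeping of normalizing constants when completing the square (so that the conditional density integrates to one and no stray factors survive) and making the assumption $y\perp y_{0}\mid x_{0}$ explicit, since that is what reduces the double integral in equation~\ref{eq:conditional} to the single marginalization used above.
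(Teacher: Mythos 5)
Your proof is correct and follows essentially the same route as the paper's: both marginalize the latent true value $x_{0}$ out of the conditionally independent pair $(y_{0}, y)$, the paper doing this by a single completion of the square in precision form, while you modularize the identical computation into the standard normal--normal conjugate update $x_{0}\mid y_{0}\sim\mathcal{N}(m,s)$ followed by a Gaussian convolution, arriving at the same $\mu^{*}=ICC\cdot y_{0}+(1-ICC)\mu_{0}$ and $(\sigma^{*})^{2}=(1+ICC)\sigma^{2}$. No gaps; the conditional independence $y\perp y_{0}\mid x_{0}$ that you flag explicitly is exactly the structural assumption the paper uses when factoring the integrand.
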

\begin{proof}
We begin by substituting the measurement error term, $\sigma$ for \emph{precision}, $\tau = \frac{1}{\sigma^{2}}$, such that
\begin{align}
p(y_{0}|x_{0}, \tau) &\propto \exp\left\{-\frac{\tau}{2}(y_{0}-x_{0})^{2}\right\} \notag \\
p(y|x_{0}, \tau) &\propto \exp\left\{-\frac{\tau}{2}(y-x_{0})^{2}\right\} \notag
\end{align}
Using Bayes' theorem we derive the conditional distribution of $y$ given $y_{0}$, $\tau$, and population parameters
$\mu_{0}$ and $\tau_{0} = \frac{1}{\sigma_{0}^{2}}$ by marginalising over the common true biomarker value $x_{0}$.
\begin{align}
p(y|y_{0},\tau, \mu_{0}, \tau_{0}) &\propto p(y_{0}, y|\tau, \mu_{0}, \tau_{0})\notag\\
&= \int\limits_{-\infty}^{\infty}p(y_{0}, y | x_{0}, \tau)p(x_{0}|\mu_{0}, \tau_{0})\text{d}x_{0} \notag \\
&= \int\limits_{-\infty}^{\infty}p(y_{0} | x_{0}, \tau)p(y | x_{0}, \tau)p(x_{0}|\mu_{0}, \tau_{0})\text{d}x_{0} \notag
\end{align}
Therefore
\begin{align}
p(y|&y_{0}, \tau, \mu_{0}, \tau_{0}) \notag \\
&\propto \int\limits_{-\infty}^{\infty} \exp\left\{-\frac{\tau}{2}(y_{0} - x_{0})^{2}\right\}\exp\left\{-\frac{\tau}{2}(y - x_{0})^{2}\right\}\notag\\
&\qquad\qquad \times\exp\left\{-\frac{\tau_{0}}{2}(x_{0} - \mu_{0})^{2}\right\} \text{d}{x_{0}} \notag \\
&\propto \int\limits_{-\infty}^{\infty}\exp\left\{ -\frac{2\tau + \tau_{0}}{2}\left(x_{0}^{2} - 2x_{0}\frac{\tau(y_{0}+y) + \tau_{0}\mu_{0}}{2\tau + \tau_{0}}\right)\right\}\text{d}{x_{0}} \notag \\
& \qquad\qquad \times \exp\left\{-\frac{\tau}{2}y^{2} \right\}\notag\\
&= \int\limits_{-\infty}^{\infty}\exp\left\{-\frac{2\tau + \tau_{0}}{2}\left(x_{0} - \frac{\tau(y_{0} + y) + \tau_{0}\mu_{0}}{2\tau + \tau_{0}}\right)^{2}\right\}\text{d}x \notag \\
&\qquad\qquad \times \exp\left\{-\frac{\tau}{2}y^{2} \right\}\exp\left\{\frac{(\tau(y_{0}+y)+\tau_{0}\mu_{0})^{2}}{2(2\tau+\tau_{0})}\right\}\notag \\
&\propto \exp\left\{-\frac{\tau}{2} y^{2} + \frac{\tau^{2}(y_{0}+y)^{2} + 2\tau_{0}\mu_{0}\tau y}{2(2\tau + \tau_{0})}\right\} \notag \\
&\propto \exp\left\{-\frac{\tau}{2(2\tau+\tau_{0})}\left(y^{2}(\tau + \tau_{0}) - 2y(y_{0}\tau+\tau_{0}\mu_{0}) \right)\right\}\notag\\
&\propto\exp\left\{-\frac{1}{2}\frac{\tau(\tau + \tau_{0})}{2\tau + \tau_{0}}\left(y - \frac{y_{0}\tau + \tau_{0}\mu_{0}}{\tau+\tau_{0}}\right)^{2}\right\} \notag
\end{align}
Now making the substitutions
\begin{align}
\mu^{*} &= \frac{y_{0}\tau + \tau_{0}\mu_{0}}{\tau+\tau_{0}}, \qquad \tau^{*} = \frac{\tau(\tau + \tau_{0})}{2\tau + \tau_{0}} \notag
\end{align}
we deduce that
\[
p(y|y_{0}, \tau, \mu_{0}, \tau_{0}) = \mathcal{N}\left(\mu^{*}, \frac{1}{\sqrt{\tau_{*}}}\right)
\]
Lastly, we have that 
\begin{align}
\mu^{*} &= \frac{\tau}{\tau + \tau_{0}}\cdot y_{0} + \left(1 - \frac{\tau}{\tau + \tau_{0}}\right)\cdot \mu_{0}\notag \\
&= \frac{\sigma_{0}^{2}}{\sigma^{2} + \sigma^{2}_{0}}\cdot y_{0} + \left(1 - \frac{\sigma_{0}^{2}}{\sigma^{2} + \sigma^{2}_{0}}\right)\cdot  \mu_{0} \notag \\
&=ICC\cdot y_{0} + \left(1 - ICC\right)\cdot \mu_{0} \notag
\end{align}
and
\begin{align}
\sigma^{*} &= \sqrt{\frac{1}{\tau^{*}}}  = \sqrt{\frac{2\tau + \tau_{0}}{\tau(\tau+\tau_{0})}}\notag \\
&= \sqrt{\frac{1}{\tau}\left( 2\cdot \frac{\tau}{\tau + \tau_{0}} + \left(1 - \frac{\tau}{\tau + \tau_{0}}\right)\right)} \notag \\
&= \sqrt{\sigma^{2}\left(\frac{\sigma^{2}_{0}}{\sigma_{0}^{2} + \sigma^{2}} + 1 \right)} \notag \\
&= \sqrt{1 + ICC}\cdot \sigma \notag
\end{align}
\end{proof}

\begin{corollary}\label{theorem:studentt}
Under the assumption that $ICC \rightarrow 1$, the marginal distribution of the change in biomarker measurement,
$d = y - y_{0}$, given a dataset of double-baseline measurements,
$\mathbf{d}_{b} = \{d_{b1}, d_{b2}, \dots, d_{bN_{b}}\}$ where $d_{bj} = y_{bj2} - y_{bj1}$, is a
Student-t distribution:
\[
p(d | \mathbf{d}_{b}, M_{0}) = \textnormal{St} \left(\sqrt{d^{2}/\overline{d^{2}_{b}}}, N_{b} \right)
\]
where $\overline{d^{2}_{b}} = \frac{1}{N_{b}}\sum_{j}d_{bj}^{2}$
\end{corollary}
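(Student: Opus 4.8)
The plan is to take Theorem~\ref{theorem:conditional_posterior} in the limit $ICC\to1$ as the starting point and then integrate out the residual uncertainty in the measurement scale using the repeat-baseline differences. First I would specialise Theorem~\ref{theorem:conditional_posterior}: sending $ICC\to1$ gives $p(y\mid y_{0},\sigma,M_{0})=\mathcal{N}(y;y_{0},\sqrt{2}\sigma)$, equivalently $p(d\mid\sigma,M_{0})=\mathcal{N}(d;0,\sqrt{2}\sigma)$ with $d=y-y_{0}$. It is convenient to reparametrise by the difference variance $v=2\sigma^{2}$, so that $p(d\mid v,M_{0})\propto v^{-1/2}\exp\{-d^{2}/2v\}$; the factor $ICC\to1$ is precisely what removes any residual dependence on $\mu_{0}$ and $\sigma_{0}$ here.

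Next I would assemble the likelihood of $v$ supplied by the double-baseline cohort. Under $M_{0}$ each repeat pair obeys $y_{bjk}\sim\mathcal{N}(x_{bj},\sigma)$, so the within-subject difference $d_{bj}=y_{bj2}-y_{bj1}$ cancels the true value $x_{bj}$ exactly and is distributed $\mathcal{N}(0,\sqrt{v})$ independently over $j$; hence $p(\mathbf{d}_{b}\mid v)\propto v^{-N_{b}/2}\exp\{-\tfrac{1}{2v}\sum_{j}d_{bj}^{2}\}$. Taking the scale-invariant (Jeffreys) prior $p(\sigma)\propto1/\sigma$, equivalently $p(v)\propto1/v$, yields a closed-form inverse-gamma posterior $p(v\mid\mathbf{d}_{b},M_{0})\propto v^{-N_{b}/2-1}\exp\{-\tfrac{1}{2v}\sum_{j}d_{bj}^{2}\}$, i.e.\ a scaled inverse-$\chi^{2}$ with $N_{b}$ degrees of freedom and scale $\overline{d^{2}_{b}}=\tfrac{1}{N_{b}}\sum_{j}d_{bj}^{2}$.

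The final step is the marginalisation $p(d\mid\mathbf{d}_{b},M_{0})=\int_{0}^{\infty}p(d\mid v,M_{0})\,p(v\mid\mathbf{d}_{b},M_{0})\,\mathrm{d}v$. The product of the two densities is an unnormalised inverse-gamma kernel in $v$, namely $v^{-(N_{b}+1)/2-1}\exp\{-\tfrac{1}{2v}(d^{2}+\sum_{j}d_{bj}^{2})\}$, whose integral equals $\Gamma\!\big(\tfrac{N_{b}+1}{2}\big)\big(\tfrac12(d^{2}+\sum_{j}d_{bj}^{2})\big)^{-(N_{b}+1)/2}\propto\big(1+d^{2}/(N_{b}\overline{d^{2}_{b}})\big)^{-(N_{b}+1)/2}$. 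Tracking the Gamma-function factors from the two normalising constants then reproduces exactly the Student-$t$ density with $N_{b}$ degrees of freedom evaluated at $\sqrt{d^{2}/\overline{d^{2}_{b}}}$, which (since $\sqrt{2}\hat{\sigma}=\sqrt{\overline{d^{2}_{b}}}$) is the explicit formula displayed just above the statement. As an independent check I would note the purely sampling-theoretic route: $\sum_{j}d_{bj}^{2}/v\sim\chi^{2}_{N_{b}}$ is independent of $d/\sqrt{v}\sim\mathcal{N}(0,1)$, so $d/\sqrt{\overline{d^{2}_{b}}}$ is a standard normal divided by $\sqrt{\chi^{2}_{N_{b}}/N_{b}}$ and is therefore Student-$t_{N_{b}}$ by definition.

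The only place where care is genuinely needed is the prior on the scale: the clean Student-$t$ with exactly $N_{b}$ degrees of freedom relies on the Jeffreys prior $p(\sigma)\propto1/\sigma$, so that the posterior carries the precise $v^{-1}$ factor making the degrees of freedom $N_{b}$ rather than $N_{b}\pm1$; a different prior would replace this with a different (shifted or generalised) marginal. I would therefore state this assumption explicitly, double-check the degrees-of-freedom bookkeeping, and keep the Jacobian straight when passing between the density of $d$ and that of the standardised statistic $\sqrt{d^{2}/\overline{d^{2}_{b}}}$. Everything else is a routine Gaussian--Gamma integral.
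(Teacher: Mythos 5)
Your proposal is correct and follows essentially the same route as the paper's proof: a scale-invariant (Jeffreys) prior on the measurement scale, a posterior built from the repeat-baseline differences, and a Gaussian--Gamma marginalisation yielding the Student-$t$ with $N_{b}$ degrees of freedom. The only differences are cosmetic—you work in the difference variance $v=2\sigma^{2}$ where the paper works in the precision $\tau=1/\sigma^{2}$ (the same prior under reparametrisation), and you add a frequentist pivotal check not present in the paper.
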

\begin{proof}
Again to make notation simpler we use the precision of the measurement system $\tau = \frac{1}{\sigma^{2}}$ such that
according to equation~\ref{eq:y_cond} the distribution of repeat baseline difference is given by
\[
p(d_{bj}|\tau) = \mathcal{N}\left(0, \sqrt{\frac{2}{\tau}}\right)
\]
when $ICC = 1$.
To determine the posterior probability of $\tau$ we use Bayes' theorem such that
\begin{align}
p(\tau|\mathbf{d}_{b}) &\propto p(\tau)p(\mathbf{d}_{b}|\tau)\notag \\
&= p(\tau)\prod_{j=1}^{N_{b}}p(d_{bj}|\tau) \notag
\end{align}
Using a non-informative prior $p(\tau)\propto \tau^{-1}$ we obtain:
\begin{align}
p(\tau|\mathbf{d}_{b}) \propto \tau^{N_{b}/2-1}\exp\left\{-\frac{\tau}{2}\sum\limits_{j=1}^{N_{b}}d_{bj}^{2}\right\} \notag
\end{align}
Marginalizing over $\tau$ thus provides the distribution of the change in measurement after treatment conditioned on the repeatability data:
\begin{align}
p(d|\mathbf{d}_{b}) &= \int\limits_{0}^{\infty}p(d|\tau)p(\tau|\mathbf{d}_{b})\text{d}\tau\notag \\
&\propto \int\limits_{0}^{\infty}\tau^{1/2}\exp\left\{-\frac{\tau}{2}d^{2}\right\}\tau^{N_{b}/2-1}\exp\left\{-\frac{\tau}{2}\sum\limits_{j=1}^{N_{b}}d_{bj}^{2}\right\} \text{d}\tau\notag \\
&\propto  \int\limits_{0}^{\infty}\tau^{(N_{b}-1)/2}\exp\left\{-\frac{\tau}{2}\left(d^{2} + \sum\limits_{j=1}^{N_{b}}d_{bj}^{2}\right)\right\} \text{d}\tau\notag
\end{align}
Using the result
\[
\int\limits_{0}^{\infty}\tau^{A}e^{-B\tau} \text{d}\tau = \frac{\Gamma(A+1)}{B^{(A+1)}}
\]
where the necessary conditions $A > -1$ and $\text{Re}(B) > 0$ are guaranteed by the data, we arrive at 
\begin{align}
p(d|\mathbf{d}_{b}) &\propto \left(d^{2} + \sum\limits_{j=1}^{N_{b}}d_{bj}^{2}\right)^{-\frac{N_{b}+1}{2}} \notag \\
&=  \left(d^{2} + N_{b}\overline{d^{2}_{b}}\right)^{-\frac{N_{b}+1}{2}} \notag \\
&\propto \left(\frac{d^{2}/\overline{d^{2}_{b}}}{N_{b}} +1 \right)^{-\frac{N_{b}+1}{2}} \notag
\end{align}
where
\[
\overline{d^{2}_{b}} = \frac{1}{N_{b}}\sum\limits_{i=1}^{N_{b}}d_{bj}^{2}
\]
We notice from its functional form that this is a Student t distribution
\[
\text{St}(t;\nu) = \frac{\Gamma\left(\frac{\nu+1}{2}\right)}{\sqrt{\nu\pi}\Gamma\left(\frac{\nu}{2}\right)}\left(\frac{t^{2}}{\nu}+1\right)^{-\frac{\nu+1}{2}}
\]
with $t^{2} = d^{2}/\overline{d^{2}_{b}}$ and $\nu = N_{b}$.
\end{proof}

\begin{theorem}\label{theorem:bf_gaussian_proof}
The natural logarithm of the Bayes factor for a measured post-treatment change $d_{y} = y_{1} - y_{0}$, in the case of
normally distributed variables with \textbf{known} model $M_{1}$ parameters $\mu_{\Delta}$, $\sigma_{\Delta}$, and $\sigma$ is
\[
\ln (BF_{10}) = \frac{1}{2}\left(\ln (1-\eta) - (1-\eta)\left(\frac{d_{y} - \mu_{\Delta}}{\sqrt{2}\sigma} \right)^{2} + \left( \frac{d_{y}}{\sqrt{2}\sigma}\right)^{2}\right)
\]
where
\[
\eta = \frac{\sigma_{\Delta}^{2}}{\sigma_{\Delta}^{2} + 2\sigma^{2}}
\]
\end{theorem}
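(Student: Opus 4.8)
The plan is to exploit the fact that all model parameters are treated as known, so that the Bayes factor for the single-patient observation reduces to a ratio of two fully specified densities. Since the statement concerns the \emph{measured change} $d_{y}=y_{1}-y_{0}$, the quantity to evaluate is $BF_{10}=p(d_{y}\mid M_{1})/p(d_{y}\mid M_{0})$, and the whole argument is then: (i) identify the marginal law of $d_{y}$ under each model, and (ii) simplify the log of the ratio using the definition of $\eta$.

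First I would compute $p(d_{y}\mid M_{0})$. Under $M_{0}$ the generating model of Table~\ref{tab:real_assumptions} gives $y_{0}=x_{i0}+\varepsilon_{0}$ and $y_{1}=x_{i0}+\varepsilon_{1}$ with $\varepsilon_{0},\varepsilon_{1}\sim\mathcal{N}(0,\sigma)$ independent, so $d_{y}=\varepsilon_{1}-\varepsilon_{0}$: the shared latent $x_{i0}$ cancels, and consequently $\mu_{0}$ and $\sigma_{0}$ never enter. As a difference of two independent zero-mean Gaussians, $d_{y}\mid M_{0}\sim\mathcal{N}(0,\sqrt{2}\,\sigma)$. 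Next, $p(d_{y}\mid M_{1})$: here $y_{1}=x_{i1}+\varepsilon_{1}$ with $x_{i1}\mid x_{i0}\sim\mathcal{N}(x_{i0}+\mu_{\Delta},\sigma_{\Delta})$, so writing $\delta=x_{i1}-x_{i0}$ we have $\delta\sim\mathcal{N}(\mu_{\Delta},\sigma_{\Delta})$ independently of $x_{i0}$, and $d_{y}=\delta+(\varepsilon_{1}-\varepsilon_{0})$ is again a sum of independent normals, giving $d_{y}\mid M_{1}\sim\mathcal{N}\!\big(\mu_{\Delta},\sqrt{\sigma_{\Delta}^{2}+2\sigma^{2}}\big)$. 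It is worth recording here that $\eta=\sigma_{\Delta}^{2}/(\sigma_{\Delta}^{2}+2\sigma^{2})$ implies $1-\eta=2\sigma^{2}/(\sigma_{\Delta}^{2}+2\sigma^{2})$, i.e.\ the $M_{1}$ variance equals $2\sigma^{2}/(1-\eta)$.

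Finally I would write $\ln BF_{10}=\ln\mathcal{N}\!\big(d_{y};\mu_{\Delta},\sqrt{\sigma_{\Delta}^{2}+2\sigma^{2}}\big)-\ln\mathcal{N}(d_{y};0,\sqrt{2}\,\sigma)$, expand both Gaussian log-densities, and collect the normalising constants as $\tfrac12\big[\ln(2\sigma^{2})-\ln(\sigma_{\Delta}^{2}+2\sigma^{2})\big]=\tfrac12\ln(1-\eta)$. The two quadratic terms are then rewritten using $1/(\sigma_{\Delta}^{2}+2\sigma^{2})=(1-\eta)/(2\sigma^{2})$ and $d_{y}^{2}/(2\sigma^{2})=(d_{y}/\sqrt{2}\sigma)^{2}$, which turns $(d_{y}-\mu_{\Delta})^{2}/[2(\sigma_{\Delta}^{2}+2\sigma^{2})]$ into $\tfrac{1-\eta}{2}\big(\tfrac{d_{y}-\mu_{\Delta}}{\sqrt{2}\sigma}\big)^{2}$ and $d_{y}^{2}/(4\sigma^{2})$ into $\tfrac12\big(\tfrac{d_{y}}{\sqrt{2}\sigma}\big)^{2}$; assembling these pieces gives the claimed identity.

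There is no deep obstacle in this argument; it is essentially a convolution-of-Gaussians computation. The two points that need care are (a) making explicit that the marginalisation over the shared baseline value $x_{i0}$ is trivial, so that $\mu_{0}$ and $\sigma_{0}$ drop out entirely---this is precisely what makes $BF_{10}$ depend only on $\mu_{\Delta}$, $\sigma_{\Delta}$, $\sigma$---and (b) tracking the factors of $\sqrt{2}\sigma$ through the algebra so that the substitution of $\eta$ lands exactly on the stated form. (One could also note that the reduction to $d_{y}$ is a modelling choice: under $M_{1}$ the latent change $\delta$ couples the sum $y_{0}+y_{1}$ to $d_{y}$, so the full-pair Bayes factor would differ; working with $d_{y}$ mirrors the classical Bland--Altman treatment and keeps the result interpretable.)
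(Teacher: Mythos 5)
Your proposal is correct: the two marginal laws you identify, $d_{y}\mid M_{0}\sim\mathcal{N}(0,\sqrt{2}\sigma)$ and $d_{y}\mid M_{1}\sim\mathcal{N}\bigl(\mu_{\Delta},\sqrt{\sigma_{\Delta}^{2}+2\sigma^{2}}\bigr)$, are exactly right (the shared latent $x_{i0}$ cancels in the difference, so $\mu_{0},\sigma_{0}$ drop out), and the algebra with $1-\eta=2\sigma^{2}/(\sigma_{\Delta}^{2}+2\sigma^{2})$ lands on the stated identity. However, your route is genuinely different from the paper's. The paper exploits the nesting of $M_{0}$ inside $M_{1}$ (namely $M_{0}$ is the slice $d_{x}=x_{1}-x_{0}=0$) and applies the Savage--Dickey density ratio: $BF_{10}=p(d_{x}=0\mid\mu_{\Delta},\tau_{\Delta},M_{1})\,/\,p(d_{x}=0\mid d_{y},\tau,\mu_{\Delta},\tau_{\Delta},M_{1})$, which requires computing the Gaussian posterior of the latent change $d_{x}$ given $d_{y}$ by completing the square, and then evaluating prior and posterior at zero. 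You instead compute the two marginal likelihoods of $d_{y}$ directly by convolving independent Gaussians and take the log-ratio. With all parameters assumed known, your direct approach is the more elementary of the two and arguably more transparent about why the answer is a ratio of two fully specified normal densities; the Savage--Dickey machinery the paper uses buys generality (it extends naturally to the case where one would marginalise over unknown $M_{1}$ parameters rather than plug in known values, which is closer in spirit to the mixture-model estimation pursued later in the paper). Your closing remark that the $d_{y}$-based Bayes factor is not the same as the full-pair $(y_{0},y_{1})$ Bayes factor is also accurate and worth keeping, since under $M_{1}$ the latent change couples $d_{y}$ to the sum $y_{0}+y_{1}$; the theorem as stated concerns $d_{y}$ only, so your reduction is faithful to the claim.
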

\begin{proof}
We use the approach of Savage and Dickey~\cite{Wagenmakers:2010tp}.
Firstly, we make the substitutions $\tau = \frac{1}{\sigma^{2}}$ and $\tau_{\Delta} = \frac{1}{\sigma_{\Delta}^{2}}$.
Next we note that model $M_{0}$ is a special case of $M_{1}$ where $d_{x} = x_{1} - x_{0} = 0$, such that
\begin{align}
BF_{10} &= \frac{p(d_{y}|M_{1})}{p(d_{y}|d_{x}=0, M_{1})} \notag \\
&= \frac{\int p(d_{y}| \tau, \mu_{\Delta}, \tau_{\Delta}, M_{1})p(\tau, \mu_{\Delta}, \tau_{\Delta}|M_{1}) \text{d}\tau\text{d}\mu_{\Delta}\text{d}\tau_{\Delta}}{\int p(d_{y}|d_{x}=0, \tau, \mu_{\Delta}, \tau_{\Delta}, M_{1})p(\tau, \mu_{\Delta}, \tau_{\Delta}|M_{1}) \text{d}\tau\text{d}\mu_{\Delta}\text{d}\tau_{\Delta}} \notag
\end{align}
However, if we can assume that parameters $\tau$, $\mu_{\Delta}$, and $\tau_{\Delta}$ are known, then we can instead
write
\begin{align}
BF_{10} &=  \frac{p(d_{y}|\tau,\mu_{\Delta}, \tau_{\Delta}, M_{1})}{p(d_{y}|d_{x}=0, \tau, \mu_{\Delta}, \tau_{\Delta}, M_{1})} \notag
\end{align}
which by Bayes' theorem becomes
\begin{align}
BF_{10} &= \frac{p(d_{x} = 0|\mu_{\Delta}, \tau_{\Delta}, M_{1})p(d_{y}|\tau, \mu_{\Delta}, \tau_{\Delta}, M_{1})}{p(d_{x}=0|d_{y}, \tau,  \mu_{\Delta}, \tau_{\Delta}, M_{1})p(d_{y}|\tau, \mu_{\Delta}, \tau_{\Delta}, M_{1})} \notag\\
&=\frac{p(d_{x} = 0|\mu_{\Delta}, \tau_{\Delta}, M_{1})}{p(d_{x}=0|d_{y}, \tau, \mu_{\Delta}, \tau_{\Delta}, M_{1})}\label{eq:conitional_bayes_factor}
\end{align}
Under the assumption that variables are real and normally distributed we have that
$p(d_{x}|\mu_{\Delta}, \tau_{\Delta}, M_{1}) = \mathcal{N}\left(\mu_{\Delta}, 1/\sqrt{\tau_{\Delta}}\right)$ and so to
find the numerator of equation~\ref{eq:conitional_bayes_factor} we write
\[
p(d_{x} = 0|\mu_{\Delta}, \tau_{\Delta}, M_{1}) = \sqrt{\frac{\tau_{\Delta}}{2\pi}}\exp\left\{-\frac{1}{2}\tau_{\Delta}\mu_{\Delta}^{2} \right\}
\]
Next we identify
\[
p(d_{x} | d_{y}, \tau, \mu_{\Delta}, \tau_{\Delta}, M_{1}) \propto p(d_{y} | d_{x}, \tau, M_{1})p(d_{x} |\mu_{\Delta}, \tau_{\Delta}, M_{1})
\]
Using the result $p(d_{y} | d_{x}, \tau, M_{1}) = \mathcal{N}\left(d_{x}, \sqrt{2 / \tau}\right)$ we obtain
\begin{align}
p(d_{x} | d_{y}, &\tau, \mu_{\Delta}, \tau_{\Delta}, M_{1})\notag\\
&\propto \exp\left\{-\frac{\tau}{4}\left(d_{y} - d_{x}\right)^{2} \right\}\exp\left\{-\frac{\tau_{\Delta}}{2}\left(d_{x} - \mu_{\Delta}\right)^{2} \right\} \notag \\
&\propto \exp\left\{-\frac{1}{4}(\tau + 2\tau_{\Delta})d_{x}^{2} + \frac{1}{2}\left(\tau d_{y} + 2\tau_{\Delta}\mu_{\Delta}\right)d_{x}\right\} \notag \\
&\propto \exp\left\{-\frac{1}{4}(\tau + 2\tau_{\Delta})\left( d_{x} - \frac{\tau d_{y} + 2\tau_{\Delta}\mu_{\Delta}}{\tau + 2\tau_{\Delta}} \right)^{2}\right\} \notag
\end{align}
which we recognise as another normal distribution:
\[
p(d_{x} | d_{y}, \tau, \mu_{\Delta}, \tau_{\Delta}, M_{1}) = \mathcal{N}\left(\frac{\tau d_{y} + 2\tau_{\Delta}\mu_{\Delta}}{\tau + 2\tau_{\Delta}}, \sqrt{\frac{2}{\tau + 2 \tau_{\Delta}}} \right)
\]
such that
\begin{align}
p(d_{x} = 0 | &d_{y}, \tau, \mu_{\Delta}, \tau_{\Delta}, M_{1}) =\notag \\
& \sqrt{\frac{\tau + 2\tau_{\Delta}}{4\pi}}\exp\left\{-\frac{1}{4}\frac{(\tau d_{y} + 2\tau_{\Delta}\mu_{\Delta})^{2}}{\tau + 2\tau_{\Delta}} \right\} \notag
\end{align}
Finally the Bayes Factor is derived as 
\begin{align}
BF_{10} &= \frac{p(d_{x} = 0|\mu_{\Delta}, \tau_{\Delta}, M_{1})}{p(d_{x}=0|d_{y}, \tau, \mu_{\Delta}, \tau_{\Delta}, M_{1})}\notag \\
& = \sqrt{\frac{2\tau_{\Delta}}{\tau + 2\tau_{\Delta}}}\exp\left\{\frac{1}{4}\left(\frac{(\tau d_{y} + 2\tau_{\Delta}\mu_{\Delta})^{2}}{\tau + 2\tau_{\Delta}} - 2\tau_{\Delta}\mu_{\Delta}^{2} \right) \right\} \notag \\
& = \sqrt{\frac{2\tau_{\Delta}}{\tau + 2\tau_{\Delta}}}\exp\left\{\frac{1}{4}\frac{\tau^{2}d_{y}^{2} + 4\tau d_{y}\tau_{\Delta}\mu_{\Delta} - 2\tau\tau_{\Delta}\mu_{\Delta}^{2}}{\tau + 2\tau_{\Delta}} \right\} \notag \\
& = \sqrt{\frac{2\tau_{\Delta}}{\tau + 2\tau_{\Delta}}}\exp\left\{-\frac{\tau}{4}\frac{2\tau_{\Delta}}{\tau + 2\tau_{\Delta}}(d_{y} - \mu_{\Delta})^{2} \right\}\exp\left\{\frac{1}{4}\tau d_{y}^{2} \right\} \notag
\end{align}
Replacing precision terms with their respective error terms we have
\begin{align}
BF_{10} = \sqrt{1 - \eta}\exp\left\{-\frac{1}{2}(1 - \eta)\left(\frac{d_{y} - \mu_{\Delta}}{\sqrt{2}\sigma}\right)^{2} \right\}\exp\left\{\frac{1}{2}\left(\frac{d_{y}}{\sqrt{2}\sigma}\right)^{2}\right\}  \notag
\end{align}
where 
\[
\eta = \frac{\sigma_{\Delta}^{2}}{\sigma_{\Delta}^{2} + 2\sigma^{2}}
\]
which in log form becomes
\[
\ln (BF_{10}) = \frac{1}{2}\left(\ln (1-\eta) - (1-\eta)\left(\frac{d_{y} - \mu_{\Delta}}{\sqrt{2}\sigma} \right)^{2} + \left( \frac{d_{y}}{\sqrt{2}\sigma}\right)^{2}\right)
\]
\end{proof}

\begin{corollary}\label{theorem:expected_bf}
The expected values of equation~\ref{eq:logbf_normal} under the assumption of models $M_{0}$ or $M_{1}$ are
\begin{align}
\mathbb{E}_{d_{y}|M_{0}}\{\ln (BF_{10})\} &= \frac{1}{2} \left(\ln(1-\eta) + \eta - (1-\eta)\xi^{2} \right) \notag \\
\mathbb{E}_{d_{y}|M_{1}}\{\ln (BF_{10})\} &= \frac{1}{2} \left(\ln(1-\eta) + \frac{\eta}{1-\eta} + \xi^{2} \right) \notag
\end{align}
where $\xi = |\mu_{\Delta}|/\sqrt{2}\sigma$.
\end{corollary}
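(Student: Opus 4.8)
The plan is to start from the closed form for $\ln(BF_{10})$ established in Theorem~\ref{theorem:bf_gaussian_proof} and take expectations term by term, after identifying the sampling distribution of the observed change $d_{y} = y_{1} - y_{0}$ under each model. Writing $w = d_{y}/(\sqrt{2}\sigma)$ and $m = \mu_{\Delta}/(\sqrt{2}\sigma)$ so that $m^{2} = \xi^{2}$, equation~\ref{eq:logbf_normal} reads $\ln(BF_{10}) = \tfrac{1}{2}\bigl(\ln(1-\eta) - (1-\eta)(w-m)^{2} + w^{2}\bigr)$, so the entire computation reduces to evaluating $\mathbb{E}[w^{2}]$ and $\mathbb{E}[(w-m)^{2}]$ under $M_{0}$ and under $M_{1}$.

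First I would read off the law of $d_{y}$ in each case from the data-generating model in Table~\ref{tab:real_assumptions}. Under $M_{0}$ we have $y_{0}, y_{1} \sim \mathcal{N}(x_{0}, \sigma)$ sharing the same mean, hence $d_{y} \sim \mathcal{N}(0, \sqrt{2}\sigma)$ and $w \sim \mathcal{N}(0,1)$. Under $M_{1}$ I would decompose $d_{y} = (y_{1} - x_{1}) + (x_{1} - x_{0}) + (x_{0} - y_{0})$ into three independent normal pieces with variances $\sigma^{2}$, $\sigma_{\Delta}^{2}$, $\sigma^{2}$ and total mean $\mu_{\Delta}$, giving $d_{y} \sim \mathcal{N}\bigl(\mu_{\Delta}, \sqrt{\sigma_{\Delta}^{2} + 2\sigma^{2}}\bigr)$. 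Using $\eta = \sigma_{\Delta}^{2}/(\sigma_{\Delta}^{2} + 2\sigma^{2})$, equivalently $\sigma_{\Delta}^{2} + 2\sigma^{2} = 2\sigma^{2}/(1-\eta)$, this rescales to $w \sim \mathcal{N}\bigl(m, 1/\sqrt{1-\eta}\bigr)$.

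Next I would apply $\mathbb{E}[Z^{2}] = \operatorname{Var}(Z) + (\mathbb{E}Z)^{2}$ in each case. Under $M_{0}$ this gives $\mathbb{E}[w^{2}] = 1$ and $\mathbb{E}[(w-m)^{2}] = 1 + m^{2} = 1 + \xi^{2}$; substituting into the formula and consolidating constants via $1 - (1-\eta) = \eta$ yields the first claimed identity. Under $M_{1}$ we get $\mathbb{E}[w^{2}] = 1/(1-\eta) + \xi^{2}$ and $\mathbb{E}[(w-m)^{2}] = 1/(1-\eta)$; substituting, the term $(1-\eta)\cdot 1/(1-\eta)$ cancels to $1$, and the elementary simplification $-1 + 1/(1-\eta) = \eta/(1-\eta)$ produces the second identity.

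I do not expect any serious obstacle here: the only points requiring care are folding the two measurement-error contributions together with the biomarker-change variance into a single variance for $d_{y}$ under $M_{1}$ and then re-expressing it through $\eta$, and the bookkeeping identity $-1 + 1/(1-\eta) = \eta/(1-\eta)$ when collecting the constant terms. Everything else is a routine moment calculation for Gaussian variables.
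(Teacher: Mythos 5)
Your proposal is correct and follows essentially the same route as the paper: identify the Gaussian law of $d_{y}$ under each model ($\mathcal{N}(0,\sqrt{2}\sigma)$ under $M_{0}$, $\mathcal{N}(\mu_{\Delta},\sqrt{2\sigma^{2}+\sigma_{\Delta}^{2}})$ under $M_{1}$) and then take the expectation of the closed-form $\ln(BF_{10})$ via second moments. Your standardization to $w=d_{y}/(\sqrt{2}\sigma)$ and explicit three-term decomposition of $d_{y}$ under $M_{1}$ are only cosmetic differences from the paper's computation with $\mathbb{E}\{d_{y}\}$ and $\mathbb{E}\{d_{y}^{2}\}$.
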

\begin{proof}
If we then investigate the expectation of this value over some distribution of $d_{y}$:
\begin{align}
\mathbb{E}&\left\{ \ln (BF_{10})\right\}\notag\\
&= \frac{1}{2}\left(\ln (1-\eta) - \frac{1-\eta}{2\sigma^{2}}\mathbb{E}\left\{\left(d_{y} - \mu_{\Delta}\right)^{2}\right\} + \frac{1}{2\sigma^{2}}\mathbb{E}\left\{d_{y}^{2}\right\}\right) \notag \\
&= \frac{1}{2}\left(\ln (1-\eta) + \frac{\eta}{2\sigma^{2}}\mathbb{E}\left\{d_{y}^{2}\right\} + \frac{1 - \eta}{2\sigma_{2}}\mu_{\Delta}\left(2\mathbb{E}\{d_{y}\} - \mu_{\Delta}\right)\right) \notag
\end{align}
Considering that 
\begin{align}
p(d_{y}|&\sigma, \mu_{\Delta}, \sigma_{\Delta}, M_{0}) = \mathcal{N}\left(0, \sqrt{2}\sigma\right)\notag\\
&\Rightarrow\ \mathbb{E}_{d_{y}|M_{0}}(d_{y}) = 0 \quad \text{and} \quad \mathbb{E}_{d_{y}|M_{0}}(d^{2}_{y}) = 2\sigma^{2}  \notag
\end{align}
and
\begin{align}
p(d_{y}|&\sigma, \mu_{\Delta}, \sigma_{\Delta}, M_{1}) = \mathcal{N}\left(\mu_{\Delta}, \sqrt{2\sigma^{2} + \sigma_{\Delta}^{2}}\right)\notag\\
&\Rightarrow\ \mathbb{E}_{d_{y}|M_{1}}(d_{y}) = \mu_{\Delta} \quad \text{and} \quad \mathbb{E}_{d_{y}|M_{1}}(d^{2}_{y}) = \mu_{\Delta}^{2} + 2\sigma^{2} + \sigma_{\Delta}^{2}  \notag
\end{align}
we have that 
\begin{align}
\mathbb{E}_{d_{y}|M_{0}}\{\ln (BF_{10})\} &= \frac{1}{2} \left(\ln(1-\eta) + \eta - (1-\eta)\xi^{2} \right) \notag \\
\mathbb{E}_{d_{y}|M_{1}}\{\ln (BF_{10})\} &= \frac{1}{2} \left(\ln(1-\eta) + \frac{\eta}{1-\eta} + \xi^{2} \right) \notag
\end{align}
\end{proof}

\newpage
\subsection{Sensitivity Analysis Results for Study 2}
\begin{figure}[htp]
    \centering
    \includegraphics[width=0.5\linewidth]{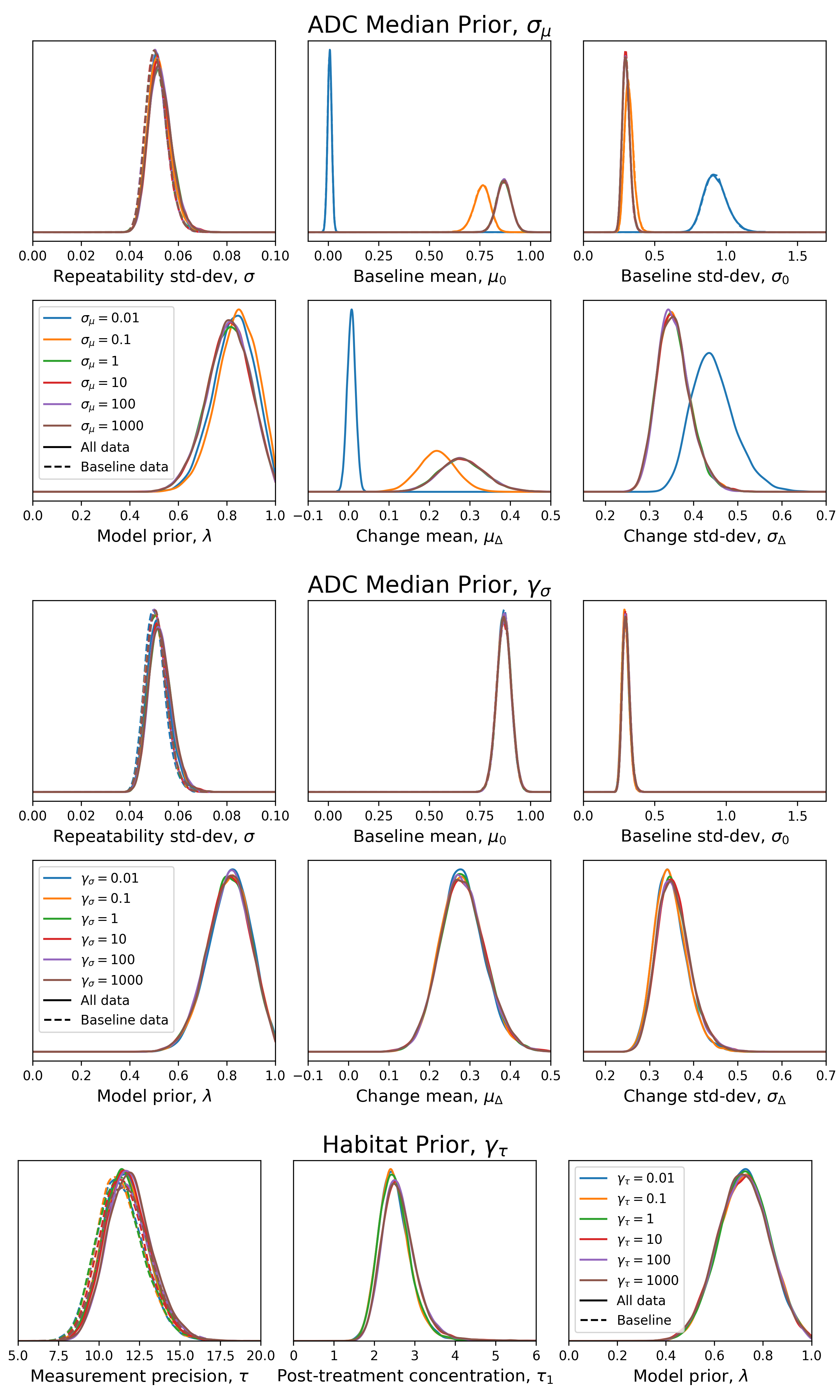}
    \caption{Prior sensitivity analysis for Study 2. The only prior parameter that had significant impact on posterior
    distributions were low values of $\sigma_{\mu}$, which reprsents the width of the zero-mean normal prior for
    $\mu_{0}$ and $\mu_{\Delta}$ in ADC median biomarkers.
    For ADC median biomarkers, other than $\lambda$, all values have units $\times 10^{-3}$ mm$^{2}$/s.}
    \label{fig:sensitivity_analysis_2}
\end{figure}

\newpage
\subsection{Median ADC Stan Code}\label{subsec:median_adc_stan_code}
Below we present the Stan code that was used for inference of the model for median ADC estimates (real values).
\begin{framed}
\scriptsize
\begin{Verbatim}[commandchars=\\\{\}]
\PY{k+kn}{data} \PY{p}{\PYZob{}}
    \PY{k+kt}{int} \PY{o}{\PYZlt{}}\PY{k}{lower}\PY{p}{=}\PY{l+m+mf}{1}\PY{o}{\PYZgt{}} \PY{n}{Nb}\PY{p}{;}          \PY{c+c1}{// Number of repeat baseline data}
    \PY{k+kt}{int} \PY{o}{\PYZlt{}}\PY{k}{lower}\PY{p}{=}\PY{l+m+mf}{1}\PY{o}{\PYZgt{}} \PY{n}{Np}\PY{p}{;}          \PY{c+c1}{// Number of post\PYZhy{}treatment data}
    \PY{k+kt}{vector}\PY{p}{[}\PY{n}{Nb}\PY{p}{]} \PY{n}{yb1}\PY{p}{;}            \PY{c+c1}{// Baseline measurement 1}
    \PY{k+kt}{vector}\PY{p}{[}\PY{n}{Nb}\PY{p}{]} \PY{n}{yb2}\PY{p}{;}            \PY{c+c1}{// Baseline measurement 2}
    \PY{k+kt}{vector}\PY{p}{[}\PY{n}{Np}\PY{p}{]} \PY{n}{yp1}\PY{p}{;}            \PY{c+c1}{// Post\PYZhy{}treatment measurement 1}
    \PY{k+kt}{vector}\PY{p}{[}\PY{n}{Np}\PY{p}{]} \PY{n}{yp2}\PY{p}{;}            \PY{c+c1}{// Post\PYZhy{}treatment measurement 2}
    \PY{k+kt}{real} \PY{o}{\PYZlt{}}\PY{k}{lower}\PY{p}{=}\PY{l+m+mf}{0}\PY{o}{\PYZgt{}} \PY{n}{sd\PYZus{}prior}\PY{p}{;}   \PY{c+c1}{// Prior width for std\PYZhy{}devs (gamma)}
    \PY{k+kt}{real} \PY{o}{\PYZlt{}}\PY{k}{lower}\PY{p}{=}\PY{l+m+mf}{0}\PY{o}{\PYZgt{}} \PY{n}{mu\PYZus{}prior}\PY{p}{;}   \PY{c+c1}{// Prior width for means}
\PY{p}{\PYZcb{}}

\PY{k+kn}{transformed data} \PY{p}{\PYZob{}}
    \PY{k+kt}{vector}\PY{p}{[}\PY{n}{Nb}\PY{p}{]} \PY{n}{db} \PY{o}{=} \PY{n}{yb2} \PY{o}{\PYZhy{}} \PY{n}{yb1}\PY{p}{;}
    \PY{k+kt}{vector}\PY{p}{[}\PY{n}{Nb}\PY{p}{]} \PY{n}{mb} \PY{o}{=} \PY{p}{(}\PY{n}{yb1} \PY{o}{+} \PY{n}{yb2}\PY{p}{)} \PY{o}{/} \PY{l+m+mf}{2}\PY{p}{;}
    \PY{k+kt}{vector}\PY{p}{[}\PY{n}{Np}\PY{p}{]} \PY{n}{dp} \PY{o}{=} \PY{n}{yp2} \PY{o}{\PYZhy{}} \PY{n}{yp1}\PY{p}{;}
\PY{p}{\PYZcb{}}

\PY{k+kn}{parameters} \PY{p}{\PYZob{}}
    \PY{k+kt}{real} \PY{o}{\PYZlt{}}\PY{k}{lower}\PY{p}{=}\PY{l+m+mf}{0}\PY{o}{\PYZgt{}} \PY{n}{sdr}\PY{p}{;}              \PY{c+c1}{// Measurement error}
    \PY{k+kt}{real} \PY{o}{\PYZlt{}}\PY{k}{lower}\PY{p}{=}\PY{l+m+mf}{0}\PY{o}{\PYZgt{}} \PY{n}{sd0}\PY{p}{;}              \PY{c+c1}{// Std\PYZhy{}dev of baseline}
    \PY{k+kt}{real} \PY{o}{\PYZlt{}}\PY{k}{lower}\PY{p}{=}\PY{l+m+mf}{0}\PY{o}{\PYZgt{}} \PY{n}{sdd}\PY{p}{;}              \PY{c+c1}{// Std\PYZhy{}dev of differences}
    \PY{k+kt}{real} \PY{n}{mu0}\PY{p}{;}                        \PY{c+c1}{// Population baseline value}
    \PY{k+kt}{real} \PY{n}{mud}\PY{p}{;}                        \PY{c+c1}{// Population mean difference}
    \PY{k+kt}{real} \PY{o}{\PYZlt{}}\PY{k}{lower}\PY{p}{=}\PY{l+m+mf}{0}\PY{p}{,}\PY{+w}{ }\PY{k}{upper}\PY{p}{=}\PY{l+m+mf}{1}\PY{o}{\PYZgt{}} \PY{n}{lambda}\PY{p}{;}  \PY{c+c1}{// Mixing proportion}
\PY{p}{\PYZcb{}}

\PY{k+kn}{model} \PY{p}{\PYZob{}}
    \PY{c+c1}{// Priors}
    \PY{n}{sdr} \PY{o}{\PYZti{}}\PY{+w}{ }\PY{n+nb}{cauchy}\PY{p}{(}\PY{l+m+mf}{0}\PY{p}{,} \PY{n}{sd\PYZus{}prior}\PY{p}{)}\PY{p}{;}
    \PY{n}{sd0} \PY{o}{\PYZti{}}\PY{+w}{ }\PY{n+nb}{cauchy}\PY{p}{(}\PY{l+m+mf}{0}\PY{p}{,} \PY{n}{sd\PYZus{}prior}\PY{p}{)}\PY{p}{;}
    \PY{n}{sdd} \PY{o}{\PYZti{}}\PY{+w}{ }\PY{n+nb}{cauchy}\PY{p}{(}\PY{l+m+mf}{0}\PY{p}{,} \PY{n}{sd\PYZus{}prior}\PY{p}{)}\PY{p}{;}
    \PY{n}{mu0} \PY{o}{\PYZti{}}\PY{+w}{ }\PY{n+nb}{normal}\PY{p}{(}\PY{l+m+mf}{0}\PY{p}{,} \PY{n}{mu\PYZus{}prior}\PY{p}{)}\PY{p}{;}
    \PY{n}{mud} \PY{o}{\PYZti{}}\PY{+w}{ }\PY{n+nb}{normal}\PY{p}{(}\PY{l+m+mf}{0}\PY{p}{,} \PY{n}{mu\PYZus{}prior}\PY{p}{)}\PY{p}{;}
    \PY{n}{lambda} \PY{o}{\PYZti{}}\PY{+w}{ }\PY{n+nb}{uniform}\PY{p}{(}\PY{l+m+mf}{0}\PY{p}{,} \PY{l+m+mf}{1}\PY{p}{)}\PY{p}{;}

    \PY{c+c1}{// repeat baseline likelihood}
    \PY{n}{db} \PY{o}{\PYZti{}}\PY{+w}{ }\PY{n+nb}{normal}\PY{p}{(}\PY{l+m+mf}{0}\PY{p}{,} \PY{n+nb}{sqrt}\PY{p}{(}\PY{l+m+mf}{2}\PY{p}{)} \PY{o}{*} \PY{n}{sdr}\PY{p}{)}\PY{p}{;}
    \PY{n}{mb} \PY{o}{\PYZti{}}\PY{+w}{ }\PY{n+nb}{normal}\PY{p}{(}\PY{n}{mu0}\PY{p}{,} \PY{n+nb}{sqrt}\PY{p}{(}\PY{n+nb}{square}\PY{p}{(}\PY{n}{sd0}\PY{p}{)} \PY{o}{+} \PY{n+nb}{square}\PY{p}{(}\PY{n}{sdr}\PY{p}{)}\PY{o}{/}\PY{l+m+mf}{2}\PY{p}{)}\PY{p}{)}\PY{p}{;}

    \PY{c+c1}{// post\PYZhy{}treatment likelihood}
    \PY{k+kt}{vector}\PY{p}{[}\PY{n}{Np}\PY{p}{]} \PY{n}{lp0}\PY{p}{;}
    \PY{k+kt}{vector}\PY{p}{[}\PY{n}{Np}\PY{p}{]} \PY{n}{lp1}\PY{p}{;}
    \PY{k}{for} \PY{p}{(}\PY{n}{n} \PY{k}{in} \PY{l+m+mf}{1}\PY{o}{:}\PY{n}{Np}\PY{p}{)} \PY{p}{\PYZob{}}
        \PY{n}{lp0}\PY{p}{[}\PY{n}{n}\PY{p}{]} \PY{o}{=} \PY{n+nb}{log1m}\PY{p}{(}\PY{n}{lambda}\PY{p}{)} \PY{o}{+}
                 \PY{n+nb}{normal\PYZus{}lpdf}\PY{p}{(}\PY{n}{dp}\PY{p}{[}\PY{n}{n}\PY{p}{]} \PY{p}{|} \PY{l+m+mf}{0}\PY{p}{,} \PY{n+nb}{sqrt}\PY{p}{(}\PY{l+m+mf}{2}\PY{p}{)} \PY{o}{*} \PY{n}{sdr}\PY{p}{)}\PY{p}{;}
        \PY{n}{lp1}\PY{p}{[}\PY{n}{n}\PY{p}{]} \PY{o}{=} \PY{n+nb}{log}\PY{p}{(}\PY{n}{lambda}\PY{p}{)} \PY{o}{+}
                 \PY{n+nb}{normal\PYZus{}lpdf}\PY{p}{(}\PY{n}{dp}\PY{p}{[}\PY{n}{n}\PY{p}{]} \PY{p}{|} \PY{n}{mud}\PY{p}{,}
                             \PY{n+nb}{sqrt}\PY{p}{(}\PY{n+nb}{square}\PY{p}{(}\PY{n}{sdd}\PY{p}{)} \PY{o}{+} \PY{n+nb}{square}\PY{p}{(}\PY{n}{sdr}\PY{p}{)} \PY{o}{*} \PY{l+m+mf}{2}\PY{p}{)}\PY{p}{)}\PY{p}{;}
    \PY{p}{\PYZcb{}}
    \PY{k}{target +=} \PY{n+nb}{sum}\PY{p}{(}\PY{n+nb}{log\PYZus{}sum\PYZus{}exp}\PY{p}{(}\PY{n}{lp0}\PY{p}{,} \PY{n}{lp1}\PY{p}{)}\PY{p}{)}\PY{p}{;}
\PY{p}{\PYZcb{}}

\PY{k+kn}{generated quantities} \PY{p}{\PYZob{}}
    \PY{k+kt}{vector}\PY{p}{[}\PY{n}{Np}\PY{p}{]} \PY{n}{z}\PY{p}{;}           \PY{c+c1}{// Model label samples}
    \PY{k}{for} \PY{p}{(}\PY{n}{n} \PY{k}{in} \PY{l+m+mf}{1}\PY{o}{:}\PY{n}{Np}\PY{p}{)}\PY{p}{\PYZob{}}
        \PY{k+kt}{real} \PY{n}{lp0} \PY{o}{=} \PY{n+nb}{log1m}\PY{p}{(}\PY{n}{lambda}\PY{p}{)} \PY{o}{+}
                   \PY{n+nb}{normal\PYZus{}lpdf}\PY{p}{(}\PY{n}{dp}\PY{p}{[}\PY{n}{n}\PY{p}{]} \PY{p}{|} \PY{l+m+mf}{0}\PY{p}{,} \PY{n+nb}{sqrt}\PY{p}{(}\PY{l+m+mf}{2}\PY{p}{)} \PY{o}{*} \PY{n}{sdr}\PY{p}{)}\PY{p}{;}
        \PY{k+kt}{real} \PY{n}{lp1} \PY{o}{=} \PY{n+nb}{log}\PY{p}{(}\PY{n}{lambda}\PY{p}{)} \PY{o}{+}
                   \PY{n+nb}{normal\PYZus{}lpdf}\PY{p}{(}\PY{n}{dp}\PY{p}{[}\PY{n}{n}\PY{p}{]} \PY{p}{|} \PY{n}{mud}\PY{p}{,}
                               \PY{n+nb}{sqrt}\PY{p}{(}\PY{n+nb}{square}\PY{p}{(}\PY{n}{sdd}\PY{p}{)} \PY{o}{+} \PY{n+nb}{square}\PY{p}{(}\PY{n}{sdr}\PY{p}{)} \PY{o}{*} \PY{l+m+mf}{2}\PY{p}{)}\PY{p}{)}\PY{p}{;}
        \PY{n}{z}\PY{p}{[}\PY{n}{n}\PY{p}{]} \PY{o}{=} \PY{n+nb}{categorical\PYZus{}rng}\PY{p}{(}\PY{p}{[}\PY{n+nb}{exp}\PY{p}{(}\PY{n}{lp0} \PY{o}{\PYZhy{}} \PY{n+nb}{log\PYZus{}sum\PYZus{}exp}\PY{p}{(}\PY{n}{lp0}\PY{p}{,} \PY{n}{lp1}\PY{p}{)}\PY{p}{)}\PY{p}{,}
                                \PY{n+nb}{exp}\PY{p}{(}\PY{n}{lp1} \PY{o}{\PYZhy{}} \PY{n+nb}{log\PYZus{}sum\PYZus{}exp}\PY{p}{(}\PY{n}{lp0}\PY{p}{,} \PY{n}{lp1}\PY{p}{)}\PY{p}{)}\PY{p}{]}\PY{o}{\PYZsq{}}\PY{p}{)}\PY{p}{;}
    \PY{p}{\PYZcb{}}
\PY{p}{\PYZcb{}}
\end{Verbatim}

\end{framed}

\newpage
\subsection{Dirichlet-Multinomial Stan Code}\label{subsec:habitat_stan_code}
Below we present the Stan code that was used for inference of the model for habitat estimates (Dirichlet-Multinomial
variates).
\begin{framed}
\scriptsize
\begin{Verbatim}[commandchars=\\\{\}]
\PY{k+kn}{functions} \PY{p}{\PYZob{}}
    \PY{k+kt}{vector} \PY{n}{dir\PYZus{}mult}\PY{p}{(}\PY{k+kt}{array}\PY{p}{[}\PY{p}{,}\PY{p}{]} \PY{k+kt}{int} \PY{n}{y}\PY{p}{,} \PY{k+kt}{vector} \PY{n}{mu}\PY{p}{,} \PY{k+kt}{real} \PY{n}{prec}\PY{p}{)} \PY{p}{\PYZob{}}
        \PY{k+kt}{int} \PY{n}{N} \PY{o}{=} \PY{n+nb}{dims}\PY{p}{(}\PY{n}{y}\PY{p}{)}\PY{p}{[}\PY{l+m+mf}{1}\PY{p}{]}\PY{p}{;}
        \PY{k+kt}{int} \PY{n}{K} \PY{o}{=} \PY{n+nb}{dims}\PY{p}{(}\PY{n}{y}\PY{p}{)}\PY{p}{[}\PY{l+m+mf}{2}\PY{p}{]}\PY{p}{;}
        \PY{k+kt}{vector}\PY{p}{[}\PY{n}{N}\PY{p}{]} \PY{n}{y\PYZus{}sum} \PY{o}{=} \PY{n+nb}{to\PYZus{}vector}\PY{p}{(}\PY{n+nb}{rep\PYZus{}array}\PY{p}{(}\PY{l+m+mf}{0}\PY{p}{,} \PY{n}{N}\PY{p}{)}\PY{p}{)}\PY{p}{;}
        \PY{k+kt}{vector}\PY{p}{[}\PY{n}{N}\PY{p}{]} \PY{n}{p\PYZus{}sum} \PY{o}{=} \PY{n+nb}{to\PYZus{}vector}\PY{p}{(}\PY{n+nb}{rep\PYZus{}array}\PY{p}{(}\PY{l+m+mf}{0}\PY{p}{,} \PY{n}{N}\PY{p}{)}\PY{p}{)}\PY{p}{;}
        \PY{k}{for} \PY{p}{(}\PY{n}{k} \PY{k}{in} \PY{l+m+mf}{1}\PY{o}{:}\PY{n}{K}\PY{p}{)} \PY{p}{\PYZob{}}
            \PY{k+kt}{vector}\PY{p}{[}\PY{n}{N}\PY{p}{]} \PY{n}{y\PYZus{}col} \PY{o}{=} \PY{n+nb}{to\PYZus{}vector}\PY{p}{(}\PY{n}{y}\PY{p}{[}\PY{o}{:}\PY{p}{,} \PY{n}{k}\PY{p}{]}\PY{p}{)}\PY{p}{;}
            \PY{n}{y\PYZus{}sum} \PY{o}{+=} \PY{n}{y\PYZus{}col}\PY{p}{;}
            \PY{n}{p\PYZus{}sum} \PY{o}{+=} \PY{n+nb}{lgamma}\PY{p}{(}\PY{n}{y\PYZus{}col} \PY{o}{+} \PY{n}{mu}\PY{p}{[}\PY{n}{k}\PY{p}{]} \PY{o}{*} \PY{n}{prec}\PY{p}{)}\PY{p}{;}
        \PY{p}{\PYZcb{}}
        \PY{k}{return} \PY{n+nb}{lgamma}\PY{p}{(}\PY{n}{prec}\PY{p}{)} \PY{o}{\PYZhy{}}
               \PY{n+nb}{lgamma}\PY{p}{(}\PY{n}{y\PYZus{}sum} \PY{o}{+} \PY{n}{prec}\PY{p}{)} \PY{o}{+}
               \PY{n}{p\PYZus{}sum} \PY{o}{\PYZhy{}} \PY{n+nb}{sum}\PY{p}{(}\PY{n+nb}{lgamma}\PY{p}{(}\PY{n}{mu} \PY{o}{*} \PY{n}{prec}\PY{p}{)}\PY{p}{)}\PY{p}{;}
    \PY{p}{\PYZcb{}}

    \PY{k+kt}{real} \PY{n}{dir\PYZus{}mult\PYZus{}lpmf}\PY{p}{(}\PY{k+kt}{array}\PY{p}{[}\PY{p}{,}\PY{p}{]} \PY{k+kt}{int} \PY{n}{y}\PY{p}{,} \PY{k+kt}{vector} \PY{n}{mu}\PY{p}{,} \PY{k+kt}{real} \PY{n}{prec}\PY{p}{)} \PY{p}{\PYZob{}}
        \PY{k}{return} \PY{n+nb}{sum}\PY{p}{(}\PY{n}{dir\PYZus{}mult}\PY{p}{(}\PY{n}{y}\PY{p}{,} \PY{n}{mu}\PY{p}{,} \PY{n}{prec}\PY{p}{)}\PY{p}{)}\PY{p}{;}
    \PY{p}{\PYZcb{}}
\PY{p}{\PYZcb{}}

\PY{k+kn}{data} \PY{p}{\PYZob{}}
    \PY{k+kt}{int} \PY{o}{\PYZlt{}}\PY{k}{lower}\PY{p}{=}\PY{l+m+mf}{2}\PY{o}{\PYZgt{}} \PY{n}{K}\PY{p}{;}                 \PY{c+c1}{// Number of components}
    \PY{k+kt}{int} \PY{o}{\PYZlt{}}\PY{k}{lower}\PY{p}{=}\PY{l+m+mf}{1}\PY{o}{\PYZgt{}} \PY{n}{Nb}\PY{p}{;}                \PY{c+c1}{// Number of repeat baseline data}
    \PY{k+kt}{int} \PY{o}{\PYZlt{}}\PY{k}{lower}\PY{p}{=}\PY{l+m+mf}{1}\PY{o}{\PYZgt{}} \PY{n}{Np}\PY{p}{;}                \PY{c+c1}{// Number of post\PYZhy{}treatment data}
    \PY{k+kt}{simplex}\PY{p}{[}\PY{n}{K}\PY{p}{]} \PY{n}{mu0}\PY{p}{;}                  \PY{c+c1}{// Known mean pre\PYZhy{}treatment value}
    \PY{k+kt}{array}\PY{p}{[}\PY{n}{Nb}\PY{p}{,} \PY{n}{K}\PY{p}{]} \PY{k+kt}{int} \PY{o}{\PYZlt{}}\PY{k}{lower}\PY{p}{=}\PY{l+m+mf}{0}\PY{o}{\PYZgt{}} \PY{n}{yb}\PY{p}{;}   \PY{c+c1}{// Repeat baseline measurements}
    \PY{k+kt}{array}\PY{p}{[}\PY{n}{Np}\PY{p}{,} \PY{n}{K}\PY{p}{]} \PY{k+kt}{int} \PY{o}{\PYZlt{}}\PY{k}{lower}\PY{p}{=}\PY{l+m+mf}{0}\PY{o}{\PYZgt{}} \PY{n}{yp}\PY{p}{;}   \PY{c+c1}{// Study measurements}
    \PY{k+kt}{real} \PY{o}{\PYZlt{}}\PY{k}{lower}\PY{p}{=}\PY{l+m+mf}{0}\PY{o}{\PYZgt{}} \PY{n}{prec\PYZus{}prior}\PY{p}{;}       \PY{c+c1}{// Prior width for precision (gamma)}
\PY{p}{\PYZcb{}}

\PY{k+kn}{parameters} \PY{p}{\PYZob{}}
    \PY{k+kt}{real}\PY{o}{\PYZlt{}}\PY{k}{lower}\PY{p}{=}\PY{l+m+mf}{0}\PY{o}{\PYZgt{}} \PY{n}{prec}\PY{p}{;}              \PY{c+c1}{// Precision of the measurement}
    \PY{k+kt}{real}\PY{o}{\PYZlt{}}\PY{k}{lower}\PY{p}{=}\PY{l+m+mf}{0}\PY{o}{\PYZgt{}} \PY{n}{conc}\PY{p}{;}              \PY{c+c1}{// Concentration of the changes}
    \PY{k+kt}{simplex}\PY{p}{[}\PY{n}{K}\PY{p}{]} \PY{n}{mu1}\PY{p}{;}                  \PY{c+c1}{// Population average of changes}
    \PY{k+kt}{real}\PY{o}{\PYZlt{}}\PY{k}{lower}\PY{p}{=}\PY{l+m+mf}{0}\PY{p}{,}\PY{+w}{ }\PY{k}{upper}\PY{p}{=}\PY{l+m+mf}{1}\PY{o}{\PYZgt{}} \PY{n}{lambda}\PY{p}{;}   \PY{c+c1}{// Mixing proportion}
\PY{p}{\PYZcb{}}

\PY{k+kn}{model} \PY{p}{\PYZob{}}
    \PY{c+c1}{// Priors}
    \PY{n}{prec} \PY{o}{\PYZti{}}\PY{+w}{ }\PY{n+nb}{cauchy}\PY{p}{(}\PY{l+m+mf}{0}\PY{p}{,} \PY{n}{prec\PYZus{}prior}\PY{p}{)}\PY{p}{;}
    \PY{n}{conc} \PY{o}{\PYZti{}}\PY{+w}{ }\PY{n+nb}{cauchy}\PY{p}{(}\PY{l+m+mf}{0}\PY{p}{,} \PY{n}{prec\PYZus{}prior}\PY{p}{)}\PY{p}{;}
    \PY{n}{lambda} \PY{o}{\PYZti{}}\PY{+w}{ }\PY{n+nb}{uniform}\PY{p}{(}\PY{l+m+mf}{0}\PY{p}{,} \PY{l+m+mf}{1}\PY{p}{)}\PY{p}{;}
    \PY{n}{mu1} \PY{o}{\PYZti{}}\PY{+w}{ }\PY{n+nb}{dirichlet}\PY{p}{(}\PY{n+nb}{rep\PYZus{}vector}\PY{p}{(}\PY{l+m+mf}{1}\PY{p}{,} \PY{n}{K}\PY{p}{)}\PY{p}{)}\PY{p}{;}

    \PY{c+c1}{// Repeat Baseline Likelihood}
    \PY{n}{yb} \PY{o}{\PYZti{}} \PY{n}{dirichlet\PYZus{}multinomial}\PY{p}{(}\PY{n}{mu0}\PY{p}{,} \PY{n}{prec}\PY{p}{)}\PY{p}{;}

    \PY{c+c1}{// Post\PYZhy{}treatment Likelihood}
    \PY{k+kt}{vector}\PY{p}{[}\PY{n}{Np}\PY{p}{]} \PY{n}{lp0}\PY{p}{;}
    \PY{k+kt}{vector}\PY{p}{[}\PY{n}{Np}\PY{p}{]} \PY{n}{lp1}\PY{p}{;}
    \PY{k}{for} \PY{p}{(}\PY{n}{n} \PY{k}{in} \PY{l+m+mf}{1}\PY{o}{:}\PY{n}{Np}\PY{p}{)} \PY{p}{\PYZob{}}
        \PY{n}{lp0}\PY{p}{[}\PY{n}{n}\PY{p}{]} \PY{o}{=} \PY{n+nb}{log1m}\PY{p}{(}\PY{n}{lambda}\PY{p}{)} \PY{o}{+} \PY{n}{dir\PYZus{}mult\PYZus{}lpmf}\PY{p}{(}\PY{n}{yp}\PY{p}{[}\PY{n}{n}\PY{p}{]} \PY{p}{|} \PY{n}{mu0} \PY{o}{*} \PY{n}{prec}\PY{p}{)}\PY{p}{;}
        \PY{n}{lp1}\PY{p}{[}\PY{n}{n}\PY{p}{]} \PY{o}{=} \PY{n+nb}{log}\PY{p}{(}\PY{n}{lambda}\PY{p}{)} \PY{o}{+} \PY{n}{dir\PYZus{}mult\PYZus{}lpmf}\PY{p}{(}\PY{n}{yp}\PY{p}{[}\PY{n}{n}\PY{p}{]} \PY{p}{|} \PY{n}{mu1} \PY{o}{*} \PY{n}{conc}\PY{p}{)}\PY{p}{;}
    \PY{p}{\PYZcb{}}
    \PY{k}{target +=} \PY{n+nb}{sum}\PY{p}{(}\PY{n+nb}{log\PYZus{}sum\PYZus{}exp}\PY{p}{(}\PY{n}{lp0}\PY{p}{,} \PY{n}{lp1}\PY{p}{)}\PY{p}{)}\PY{p}{;}
\PY{p}{\PYZcb{}}

\PY{k+kn}{generated quantities} \PY{p}{\PYZob{}}
    \PY{k+kt}{vector}\PY{p}{[}\PY{n}{Np}\PY{p}{]} \PY{n}{z}\PY{p}{;}           \PY{c+c1}{// Model label samples}
    \PY{k}{for} \PY{p}{(}\PY{n}{n} \PY{k}{in} \PY{l+m+mf}{1}\PY{o}{:}\PY{n}{Np}\PY{p}{)} \PY{p}{\PYZob{}}
        \PY{k+kt}{real} \PY{n}{lp0} \PY{o}{=} \PY{n+nb}{log1m}\PY{p}{(}\PY{n}{lambda}\PY{p}{)} \PY{o}{+} \PY{n}{dir\PYZus{}mult\PYZus{}lpmf}\PY{p}{(}\PY{n}{yp}\PY{p}{[}\PY{n}{n}\PY{p}{]} \PY{p}{|} \PY{n}{mu0} \PY{o}{*} \PY{n}{prec}\PY{p}{)}\PY{p}{;}
        \PY{k+kt}{real} \PY{n}{lp1} \PY{o}{=} \PY{n+nb}{log}\PY{p}{(}\PY{n}{lambda}\PY{p}{)} \PY{o}{+} \PY{n}{dir\PYZus{}mult\PYZus{}lpmf}\PY{p}{(}\PY{n}{yp}\PY{p}{[}\PY{n}{n}\PY{p}{]} \PY{p}{|} \PY{n}{mu1} \PY{o}{*} \PY{n}{conc}\PY{p}{)}\PY{p}{;}
        \PY{n}{z}\PY{p}{[}\PY{n}{n}\PY{p}{]} \PY{o}{=} \PY{n+nb}{categorical\PYZus{}rng}\PY{p}{(}\PY{p}{[}\PY{n+nb}{exp}\PY{p}{(}\PY{n}{lp0} \PY{o}{\PYZhy{}} \PY{n+nb}{log\PYZus{}sum\PYZus{}exp}\PY{p}{(}\PY{n}{lp0}\PY{p}{,} \PY{n}{lp1}\PY{p}{)}\PY{p}{)}\PY{p}{,}
                                \PY{n+nb}{exp}\PY{p}{(}\PY{n}{lp1} \PY{o}{\PYZhy{}} \PY{n+nb}{log\PYZus{}sum\PYZus{}exp}\PY{p}{(}\PY{n}{lp0}\PY{p}{,} \PY{n}{lp1}\PY{p}{)}\PY{p}{)}\PY{p}{]}\PY{o}{\PYZsq{}}\PY{p}{)}\PY{p}{;}
    \PY{p}{\PYZcb{}}
\PY{p}{\PYZcb{}}
\end{Verbatim}

\end{framed}

\end{document}